\def\ps@pprintTitle{%
   \let\@oddhead\@empty
   \let\@evenhead\@empty
   \def\@oddfoot{\reset@font\hfil\thepage\hfil}
   \let\@evenfoot\@oddfoot
}
\theoremstyle{thmstyleone}%
\theoremstyle{thmstyletwo}%
\theoremstyle{thmstylethree}%
\newtheorem{theorem}{Theorem}
\newtheorem{lemma}[theorem]{Lemma}
\newtheorem{definition}[theorem]{Definition}
\newtheorem{example}[theorem]{Example}
\newcommand{\ord}{{\mathrm{ord}}}
\begin{document}
%\linenumbers
%\tableofcontents

\begin{frontmatter}

%% Title, authors and addresses

%% use the tnoteref command within \title for footnotes;
%% use the tnotetext command for the associated footnote;
%% use the fnref command within \author or \address for footnotes;
%% use the fntext command for the associated footnote;
%% use the corref command within \author for corresponding author footnotes;
%% use the cortext command for the associated footnote;
%% use the ead command for the email address,
%% and the form \ead[url] for the home page:
%%
%% \title{Title\tnoteref{label1}}
%% \tnotetext[label1]{}
%% \author{Name\corref{cor1}\fnref{label2}}
%% \ead{email address}
%% \ead[url]{home page}
%% \fntext[label2]{}
%% \cortext[cor1]{}
%% \address{Address\fnref{label3}}
%% \fntext[label3]{}

\title{A family of self-orthogonal divisible  codes with locality 2}

\tnotetext[fn1]{
This research was supported in part by the National Natural
Science Foundation of China under Grant 12271059,  and in part
by the Fundamental Research Funds for the Central Universities,
CHD, under Grant 300102122202.
}

\author[1]{Ziling Heng}
\ead{zilingheng@chd.edu.cn}
\author[1]{Mengjie Yang}
\ead{yang199902042023@163.com}
\author[2]{Yang Ming}
\ead{yangming@chd.edu.cn}

\cortext[cor]{Corresponding author}
\address[1]{School of Science, Chang'an University, Xi'an 710064, China}
\address[2]{School of Information Engineering, Chang'an University, Xi'an 710064, China}
%$^{\ast}$ \thanks{*Corresponding author}

%% use optional labels to link authors explicitly to addresses:
%% \author[label1,label2]{<author name>}
%% \address[label1]{<address>}
%% \address[label2]{<address>}
%\author{Cunsheng Ding}
%\ead{cding@ust.hk}

%\cortext[lcj]{Corresponding author}
%\address{Department of Computer Science and Engineering,
%The Hong Kong University of Science and Technology,
%Clear Water Bay, Kowloon, Hong Kong, China}

%\tableofcontents

\begin{abstract}
Linear codes are widely studied due to their applications in communication, cryptography, quantum codes, distributed storage and many other fields.
In this paper, we use the trace and norm functions over finite fields to construct a  family of linear codes.
The weight distributions of the codes are determined  in three cases via Gaussian sums. The codes are shown to be self-orthogonal divisible codes with only
three, four or five nonzero weights in these cases. In particular, we prove that this family of linear codes has locality 2.  Several optimal or almost optimal linear codes and locally recoverable codes are derived. In particular, an infinite family of distance-optimal binary linear codes with respect to the sphere-packing bound is obtained.
The self-orthogonal codes derived in this paper can be used to construct lattices and have nice application in distributed storage.
\end{abstract}

\begin{keyword}
Self-orthogonal code \sep weight distribution \sep Gaussian sum
%% PACS codes here, in the form: \PACS code \sep code

%% MSC codes here, in the form: \MSC code \sep code
%% or \MSC[2008] code \sep code (2000 is the default)
%%\MSC  94B05 \sep 94A05
\end{keyword}

\end{frontmatter}
\section{Introduction}\label{sec1}
Let $\mathbb{F}_{q}$ denote the finite field with $q$ elements, where $q$ is a power of a prime $p$. An $[n,k,d]$ linear code $\mathcal{C}$ over $\mathbb{F}_{q}$ is a $k$-dimensional subspace of $\mathbb{F}_{q}^{n}$ with minimum Hamming distance $d$.
Let $A_{i}$ denote the number of codewords with Hamming weight $i$ in a code $\mathcal{C}$ with length $n$. The weight enumerator of an $[n,k]$ linear code $\mathcal{C}$ is defined by the polynomial
\begin{eqnarray*}
1+A_{1}z+\cdots+A_{n}z^{n}.
\end{eqnarray*}
The sequence $(1, A_{1}, \cdots, A_{n})$ is called the weight distribution of $\mathcal{C}$. A linear code $\mathcal{C}$ is referred to as a $t$-weight code if the number of nonzero $A_{j}$, $1 \leq j \leq n$, in the sequence $(1, A_{1}, \cdots, A_{n})$ equals $t$.
It is well known that the weight enumerator of a code not only contains its error detection and error correction capabilities, but also plays an important role in calculating the error probability of error detection and correction of the code. The reader is referred to \cite{LR, C, KC, CQ} for known weight enumerators of some linear codes.
For any $[n,k,d]$ linear code, some trade-offs exist among the parameters $n,k,d$.
The Griesmer bound \cite{GJH} for an $[n,k,d]$ linear code over $\mathbb{F}_{q}$ is given by
\begin{eqnarray*}
n\geq\sum_{i=0}^{k-1}\left\lceil\frac{d}{q^{i}}\right\rceil,
\end{eqnarray*}
where $\lceil\cdot\rceil$ is the ceiling function. A linear code meeting the bound with equality is said to be a Griesmer code.
An $[n,k,d]$ linear code over $\mathbb{F}_q$ is said to be distance-optimal if there exists no $[n,k,d+1]$ linear code over $\mathbb{F}_q$.

The divisibility is an important property of linear codes. A linear code over $\mathbb{F}_q$ is said to be divisible if all its codewords have weights divisible by an integer $\Delta>1$. Then the code is said to be $\Delta$-divisible and $\Delta$ is called a divisor of the code.
If $\gcd(\Delta,q)=1$, then $\mathcal{C}$ is equivalent to a code obtained by taking a linear code over $\mathbb{F}_q$, repeating each coordinate $\Delta$ times, and adding on enough $0$ entries to make a code whose length is that of $\mathcal{C}$ \cite{Ward1}.
Hence, the most interesting case is that $\Delta$ is a power of the characteristic of $\mathbb{F}_q$.
Ward introduced the divisible codes in 1981 \cite{Ward1} and gave a survey in 2001 \cite{Ward2}.
Divisible codes have many applications including Galois geometries, subspace codes, partial spreads, vector space partitions, and Griesmer
codes \cite{KK, K, Ward2, Ward3}.

The self-orthogonality is another important property of linear codes.
The dual code of an $[n, k]$ linear code $\mathcal{C}$ is defined as $\mathcal{C}^{\bot}=\{\mathbf{u}\in\mathbb{F}_{q}^{n}:\langle\mathbf{u},\mathbf{v}\rangle=0$ for all $\mathbf{v}\in\mathcal{C}\}$, where $\langle\cdot\rangle$ denotes the standard inner product. Then $\mathcal{C}^{\bot}$ is an $[n, n-k]$ linear code. If a linear code $\mathcal{C}$ satisfies $\mathcal{C}\subseteq\mathcal{C}^{\bot}$, then it is called a \emph{self-orthogonal} code. If a linear code $\mathcal{C}$ satisfies $\mathcal{C}=\mathcal{C}^{\bot}$, then it is said to be self-dual. For a cyclic code $\mathcal{C}$ with generator polynomial $g(x)$, $\mathcal{C}$ is self-orthogonal if and only if $g^{\bot}(x)\mid g(x)$, where $g^{\bot}(x)=x^{k}h(x^{-1})/h(0)$ with $h(x):=\frac{x^{n}-1}{g(x)}$ \cite{WV}. For binary and ternary linear codes, we can judge the self-orthogonality of them by the divisibility of their weights \cite{WV}. However, it is very difficult to judge whether a general $q$-ary linear code is self-orthogonal or not. The reader is referred to \cite{DLM, WL, WL2, ZXCC} for some known self-orthogonal codes.
Recently, in \cite{LH}, Li and Heng gave a useful sufficient condition for a $q$-ary linear code containing the all-1 vector to be self-orthogonal for odd prime power $q$.

\begin{lemma}\label{th-selforthogonal}\cite{LH}
Let $q=p^s$, where $p$ is an odd prime. Let $\mathcal{C}$ be an $[n,k,d]$ linear code over $\mathbb{F}_q$ with $\mathbf{1}\in\mathcal{C}$, where $\mathbf{1}$ is all-$1$ vector of length $n$. If the linear code $\mathcal{C}$ is $p$-divisible, then $\mathcal{C}$ is self-orthogonal.
\end{lemma}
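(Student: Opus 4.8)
The plan is to reduce self-orthogonality to the vanishing of the ``square norm'' $\langle \bc,\bc\rangle$ of every codeword, and then to extract that vanishing from the $p$-divisibility of the weights together with the presence of $\bone$. First I would record a counting fact. For a codeword $\bc\in\C$ and a scalar $a\in\F_q$, let $N_a(\bc)$ denote the number of coordinates of $\bc$ equal to $a$. Since $\bone\in\C$ and $\C$ is linear, $\bc-a\bone\in\C$, and its weight is exactly the number of coordinates where the entry differs from $a$, namely $\wt(\bc-a\bone)=n-N_a(\bc)$. By $p$-divisibility this weight is $\equiv 0\pmod p$, so
\begin{equation}\label{eq-Na}
N_a(\bc)\equiv n \pmod p \qquad \text{for every } a\in\F_q.
\end{equation}

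Next I would compute $\langle \bc,\bc\rangle$ in $\F_q$ by grouping coordinates according to their value. Writing the sum of squares as $\langle \bc,\bc\rangle=\sum_{a\in\F_q} N_a(\bc)\,a^2$ and using \eqref{eq-Na} (so that each integer coefficient $N_a(\bc)$ may be replaced by $n$ modulo the characteristic $p$), I obtain
\begin{equation}\label{eq-norm}
\langle \bc,\bc\rangle = n\sum_{a\in\F_q} a^2 \qquad \text{in } \F_q.
\end{equation}
The power sum $\sum_{a\in\F_q} a^2$ equals $0$ unless $(q-1)\mid 2$, that is unless $q=3$, in which case it equals $-1$. Hence for $q>3$ the right-hand side of \eqref{eq-norm} is $0$ outright. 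For the remaining case $q=3$, note that $\wt(\bone)=n$ is itself $p$-divisible, so $n\equiv 0\pmod 3$, and \eqref{eq-norm} again gives $\langle \bc,\bc\rangle=0$. In every case $\langle \bc,\bc\rangle=0$ for all $\bc\in\C$.

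Finally I would polarize. For any $\bx,\by\in\C$ we have $\bx+\by\in\C$, so
\begin{equation*}
0=\langle \bx+\by,\bx+\by\rangle=\langle \bx,\bx\rangle+2\langle \bx,\by\rangle+\langle \by,\by\rangle=2\langle \bx,\by\rangle.
\end{equation*}
Because $p$ is odd, $2$ is invertible in $\F_q$, whence $\langle \bx,\by\rangle=0$. As $\bx,\by$ were arbitrary, $\C\subseteq\C^{\bot}$ and $\C$ is self-orthogonal.

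The only real subtlety, and the step I would treat most carefully, is the bookkeeping in \eqref{eq-norm}: the counts $N_a(\bc)$ are honest nonnegative integers, and one must justify replacing them by their residues modulo $p$ once the sum is read inside $\F_q$ (this is legitimate precisely because $\F_q$ has characteristic $p$). The isolated case $q=3$, where the quadratic power sum fails to vanish, is the other point that needs separate but immediate attention, and it is exactly there that the hypothesis $\bone\in\C$ is used a second time.
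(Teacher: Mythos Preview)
The paper does not give its own proof of this lemma; it is quoted verbatim from \cite{LH} and used as a black box. So there is nothing in the paper to compare your argument against.

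Your proof is correct. One minor streamlining: once you have established \eqref{eq-Na}, you do not actually need the dichotomy $q>3$ versus $q=3$. Taking $\bc=\bone$ (or just $a=0$ in \eqref{eq-Na} applied to the zero codeword, or directly noting $\wt(\bone)=n$) already gives $n\equiv 0\pmod p$ in every case, so the right-hand side of \eqref{eq-norm} is $n\sum_{a}a^{2}=0$ in $\F_q$ without ever evaluating the power sum. Your separate treatment of $q=3$ is not wrong, merely unnecessary. The polarization step and the use of $2\in\F_q^{*}$ are exactly as they should be.
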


The locality property of a linear code is of interest in distributed data storage systems  \cite{GH, P}.
A linear code is called a locally recoverable code (LRC for short) with locality $r$ if any symbol in the encoding is a function of $r$ other symbols. It means that any symbol can be repaired from at most $r$ other code symbols.
The repair cost of an $[n,k,d]$ LRC with locality $r\ll k$ is lower in comparison to MDS codes. Locally recoverable codes have been implemented in practice by Microsoft and Facebook \cite{CC}, \cite{M}.
In the literature, there are many interesting results on LRCs. In \cite{GH},  Gopalan et al. gave the  Singleton-like bound on the parameters of LRCs.
In \cite{HX}, Hao et al. studied bounds and constructions of LRCs from the viewpoint of parity-check matrices.
In \cite{TB}, Tamo and Barg constructed optimal LRCs of length $n$ by evaluations of special polynomials over finite field $\mathbb{F}_q$ such that $q\geq n$.
After that,  some good polynomials to obtain Tamo-Barg codes with more  flexible parameters were proposed in \cite{LMC, M}.
In \cite{KWG}, some constructions of optimal LRCs with super-linear length were given.
Recently, Tan et al. established some general theory on the minimum locality of linear codes and studied the minimum locality of some special families of linear codes in \cite{TFDT}. Considering the alphabet size of an $[n,k,d]$ LRC over $\mathbb{F}_{q}$ with locality $r$, Cadambe and Mazumdar \cite{CM} presented a  bound as
\begin{eqnarray}\label{CMe}
k\leq\min\limits_{1\leq t\leq n/(r+1)}\left\{tr+k_{\text{opt}}^{(q)}(n-t(r+1),d)\right\},
\end{eqnarray}
where $k_{\text{opt}}^{(q)}(n,d)$ denotes the maximum dimension of a linear code over $\mathbb{F}_{q}$ of length $n$ and minimum distance $d$. An LRC attaining the bound is said to be optimal. An LRC of dimension which equals the bound minus $1$ is said to be almost optimal.

Constructing new linear codes with desirable properties has been an interesting research topic in coding theory.
In this paper, we will present a construction of self-orthogonal divisible linear codes with locality 2 by the trace and norm functions over finite fields.
Let $m$, $m_{1}$, $m_{2}$ be positive integers such that $m_{1}\mid m$, $m_{2}\mid m$ and $\gcd{(m_{1},m_{2})}=e$. Let $\mathrm{Tr}_{q^{m_{i}}/q}$ be the trace function from $\mathbb{F}_{q^{m_{i}}}$ to $\mathbb{F}_{q}$ defined by
$$\mathrm{Tr}_{q^{m_{i}}/q}(x)=\sum_{j=0}^{i-1}x^{q^i},\ x\in \mathbb{F}_{q^{m_{i}}},  i=1,2.$$
 Let $\mathrm{N}_{q^{m}/q^{m_{i}}}$ be the norm function from $\mathbb{F}_{q^{m}}$ to $\mathbb{F}_{q^{m_{i}}}$ defined by
\begin{eqnarray*}
\mathrm{N}_{q^{m}/q^{m_{i}}}(x)=x^{1+q^{m_{i}}+\cdots+(q^{m_{i}})^{\frac{m}{m_{i}}-1}}=x^{\frac{q^{m}-1}{q^{m_{i}}-1}},\ x\in\mathbb{F}_{q^{m}}, i=1,2.
\end{eqnarray*}
Choose a defining set
 \begin{eqnarray}\label{D}
D=\left\{x\in\mathbb{F}_{q^{m}}:\mathrm{Tr}_{q^{m_{2}}/q}(\mathrm{N}_{q^{m}/q^{m_{2}}}(x))=0\right\}.
\end{eqnarray}
Define a $q$-ary linear code by
\begin{eqnarray*}
\mathcal{C}_{D}=\left\{\left(\mathrm{Tr}_{q^{m_{1}}/q}(b\mathrm{N}_{q^{m}/q^{m_{1}}}(x))
\right)_{x\in D}:b\in\mathbb{F}_{q^{m_{1}}}\right\}.
\end{eqnarray*}
Let $D':=D\backslash \{0\}$. We remark that the corresponding linear code $\mathcal{C}_{D'}$ was studied for some special cases in \cite{ZQ}.
The augmented code of $\mathcal{C}_{D}$ is defined by
\begin{eqnarray*}
\overline{\mathcal{C}_{D}}=\left\{\left(\mathrm{Tr}_{q^{m_{1}}/q}(b\mathrm{N}_{q^{m}/q^{m_{1}}}(x)\right)_{x\in D}+c\mathbf{1}: b\in\mathbb{F}_{q^{m_{1}}},c\in\mathbb{F}_{q}\right\},
\end{eqnarray*}
where $\mathbf{1}$ is the all-1 vector of length $\vert D\vert$. As was pointed out by Ding and Tang in \cite{DT}, it is usually a challenge to determine the parameters of the augmented code of a linear code as we may require the complete weight information of this linear code.
In \cite{HXYQ}, the parameters and weight distribution of $\overline{\mathcal{C}_{D}}$ were given for the case $m=m_1=2m_2$.
In this paper, we will mainly prove that $\overline{\mathcal{C}_{D}}$ has locality 2 and determine the parameters and weight distribution of $\overline{\mathcal{C}_{D}}$ in three more cases. Besides, we will prove that $\overline{\mathcal{C}_{D}}$ is self-orthogonal and $q$-divisible in these cases.
In particular, many optimal or almost optimal linear codes and locally recoverable codes will also be derived.

\section{Preliminaries}\label{sec2}

In this section, we present some notations and basic results on additive and multiplicative characters of finite fields and Gaussian sums.
\subsection{Notations}
The following notations will be used in this paper:

$\chi$, $\chi_{1}$, $\chi_{2}$: ~canonical additive characters of $\mathbb{F}_{q}$, $\mathbb{F}_{q^{m_{1}}}$, $\mathbb{F}_{q^{m_{2}}}$, respectively;

$\lambda$, $\lambda_{1}$, $\lambda_{2}$: ~generators of multiplicative character groups of $\mathbb{F}_{q}$, $\mathbb{F}_{q^{m_{1}}}$, $\mathbb{F}_{q^{m_{2}}}$, respectively;

$G(\lambda)$, $G(\lambda_{1})$, $G(\lambda_{2})$: ~Gaussian sums over $\mathbb{F}_{q}$, $\mathbb{F}_{q^{m_{1}}}$,
$\mathbb{F}_{q^{m_{2}}}$, respectively;

$\alpha$ ,$\alpha_{1}$, $\alpha_{2}$: ~primitive elements of $\mathbb{F}_{q^{m}}^{*}$, $\mathbb{F}_{q^{m_{1}}}^{*}$, $\mathbb{F}_{q^{m_{2}}}^{*}$, respectively;

$e$: ~$e=\gcd(m_{1},m_{2})$;

$l$: ~$l=\gcd(\frac{m_{2}}{e},q-1)$.
\subsection{Characters of finite fields and Gaussian sums}
Let $q=p^{s}$ with $p$ a prime. Denote the primitive $p$-th root of complex unity by $\zeta_{p}$. Define the additive character of $\mathbb{F}_{q}$ by the homomorphism $\chi$ from the additive group $\mathbb{F}_{q}$ to the complex multiplicative group $\mathbb{C}^{*}$
such that $\chi(x+y)=\chi(x)\chi(y)$ for all $x, y\in\mathbb{F}_{q}$.
For any $a\in\mathbb{F}_{q}$, an additive character of $\mathbb{F}_{q}$ can be defined by the function $\chi_{a}(x)=\zeta_{p}^{\mathrm{Tr}_{q/p}(ax)}, x\in\mathbb{F}_{q}$, where $\mathrm{Tr}_{q/p}(x)$ is the trace function from $\mathbb{F}_{q}$ to $\mathbb{F}_{p}$. In addition, the set $\widehat{\mathbb{F}_{q}}:=\{\chi_{a}:a\in\mathbb{F}_{q}\}$ is called the \emph{additive character group} of $\mathbb{F}_q$ which consists of all $q$ different additive characters of $\mathbb{F}_{q}$. By definition, it is obvious that $\chi_{a}(x) = \chi_{1}(ax)$. In particular, $\chi_{0}$ is referred to as the trivial additive character  and $\chi_{1}$ is called the canonical additive character of $\mathbb{F}_{q}$. The orthogonal relation of additive characters (see\cite{RH}) is given by
\begin{eqnarray*}
\sum\limits_{x\in\mathbb{F}_{q}}\chi_{a}(x)=
\begin{cases}
q &  \mbox{ if }a=0,\\
0 &  \mbox{ otherwise}.
\end{cases}
\end{eqnarray*}

Let $\mathbb{F}_{q}^{*}=\langle\beta\rangle$. For each $0\leq j\leq q-2$, a multiplicative character of $\mathbb{F}_{q}$ is defined as the homomorphism $\psi$ from the multiplicative group $\mathbb{F}_{q}^{*}$ to the complex multiplicative group $\mathbb{C}^{*}$ such that $\psi(xy)=\psi(x)\psi(y)$ for all $x, y\in \mathbb{F}_{q}^{*}$. The function $\psi_{j}(\beta^{k})=\zeta_{q-1}^{jk}$ for $k=0, 1, \cdots, q-2$ gives a multiplicative character, where $0\leq j\leq q-2$. The set $\widehat{\mathbb{F}_{q}^{*}}:=\{\psi_{j}: j = 0, 1, \cdots, q-2\}$ is called the \emph{multiplicative character group} of $\mathbb{F}_q$
which consists of all the multiplicative characters of $\mathbb{F}_{q}$. In particular, $\psi_{0}$ is called the trivial multiplicative character and $\eta:=\psi_{\frac{q-1}{2}}$ for odd $q$ is referred to as the quadratic multiplicative character of $\mathbb{F}_{q}$. The orthogonal relation of multiplicative characters (see\cite{RH}) is given by
\begin{eqnarray*}
\sum\limits_{x\in\mathbb{F}_{q}^{*}}\psi_{j}(x)=
\begin{cases}
q-1 &  \mbox{if}~j=0,\\
0 &  \mbox{if}~j\neq0.
\end{cases}
\end{eqnarray*}
The conjugate $\bar{\psi}$ of a multiplicative character of $\mathbb{F}_q$ is defined by $\bar{\psi}(x)=\overline{\psi(x)}$ for $x\in \mathbb{F}_q^*$.

For an additive character $\chi$ and a multiplicative character $\psi$ of $\mathbb{F}_{q}$, the Gaussian sum over $\mathbb{F}_{q}$ is defined by
\begin{eqnarray*}
G(\psi,\chi)=\sum\limits_{x\in\mathbb{F}_{q}^{*}}\psi(x)\chi(x).
\end{eqnarray*}
In the following, we use $G(\psi)$ to denote $G(\psi,\chi)$ for simplicity if $\chi$ is the canonical additive character.
We can easily prove $G(\psi_{0})=-1$ and $G(\overline{\psi}) = \psi(-1)\overline{G(\psi)}$. If $\psi\neq\psi_{0}$, we have $\vert G(\psi)\vert =\sqrt{q}$ \cite{RH}. Gaussian sums can be viewed as the Fourier coefficients in the Fourier expansion of the restriction of $\chi$ to $\mathbb{F}_{q}^{*}$ in terms of the multiplicative characters of $\mathbb{F}_{q}$ \cite{RH},
i.e.
\begin{eqnarray}\label{s1}
\chi(x)=\frac{1}{q-1}\sum\limits_{\psi\in\widehat{\mathbb{F}}_{q}^{*}}G(\overline{\psi})\psi(x),x\in\mathbb{F}_{q}^{*},
\end{eqnarray}
where $\chi$ is the canonical additive character.
In this paper, Gaussian sum is an vital tool to compute exponential sums. In general, determining the explicit values of Gauss sums is difficult. In some special cases, Gaussian sums were explicitly determined in \cite{BRK}\cite{RH}.

In the following, we state the Gaussian sums in the semi-primitive case.
\begin{lemma}\label{le2}\cite[Semi-primitive case Gaussian sums]{BRK}
Let $\phi$ be a multiplicative character of order $N$ of $\mathbb{F}_{r}^{*}$. Assume that $N\neq 2$ and there exists a least positive integer $j$ such that $p^{j}\equiv -1\pmod N$. Let $r=p^{2j\gamma}$ for some integer $\gamma$. Then the Gaussian sums of order $N$ over $\mathbb{F}_{r}$ are given by
\begin{eqnarray*}
G(\phi)=
\begin{cases}
(-1)^{\gamma-1}\sqrt{r} & if ~p=2,\\
(-1)^{\gamma-1+\frac{\gamma(p^{j}+1)}{N}}\sqrt{r} & if ~p\geq3.
\end{cases}
\end{eqnarray*}
Furthermore, for $1\leq s\leq N-1$, the Gaussian sums $G(\phi^{s})$ are given by
\begin{eqnarray*}
G(\phi^{s})=
\begin{cases}
(-1)^{s}\sqrt{r} & if ~N ~is ~even, ~p, ~\gamma ~and ~\frac{p^{j}+1}{N} ~are ~odd,\\
(-1)^{\gamma-1}\sqrt{r} & otherwise.
\end{cases}
\end{eqnarray*}
\end{lemma}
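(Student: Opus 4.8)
The statement is a classical explicit evaluation, so the plan is to reconstruct it from the internal symmetries of Gaussian sums together with the Davenport--Hasse lifting relation, isolating the sign determination as the only genuinely hard point. \emph{Step 1 (symmetry forces the shape of the answer).} First I would record that, relative to the canonical additive character, $G(\phi)=G(\phi^{p})$: the substitution $x\mapsto x^{p}$ permutes $\mathbb{F}_{r}^{*}$ and fixes $\chi$ because the absolute trace satisfies $\mathrm{Tr}(x^{p})=\mathrm{Tr}(x)$, and iterating $j$ times gives $G(\phi)=G(\phi^{p^{j}})$. The semi-primitive hypothesis $p^{j}\equiv-1\pmod N$ together with $\ord(\phi)=N$ yields $\phi^{p^{j}}=\phi^{-1}=\overline{\phi}$, hence $G(\overline{\phi})=G(\phi)$. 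Substituting this into the identity $G(\overline{\phi})=\phi(-1)\overline{G(\phi)}$ recalled in the preliminaries gives $\overline{G(\phi)}=\phi(-1)\,G(\phi)$. A short computation shows $\phi(-1)=(-1)^{(r-1)/N}=1$ here, since $(r-1)/N=\frac{p^{2j\gamma}-1}{p^{j}+1}\cdot\frac{p^{j}+1}{N}$ is even (its first factor is), so $G(\phi)$ is \emph{real}; combined with $\lvert G(\phi)\rvert=\sqrt{r}$ this already forces $G(\phi)=\pm\sqrt{r}$, and only the sign remains.

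\emph{Step 2 (the base field, the crux).} I would next fix the sign over $\mathbb{F}_{p^{2j}}$, i.e.\ the case $\gamma=1$. Here $\phi$ has order $N$ with $p^{j}\equiv-1\pmod N$, and I would reduce $G(\phi)$ modulo a prime above $p$ in $\mathbb{Z}[\zeta_{N},\zeta_{p}]$ and invoke Stickelberger's congruence (equivalently the Gross--Koblitz formula), which pins down both the $p$-adic valuation and the leading unit of $G(\phi)$, and thereby its sign. This yields $G(\phi)=\sqrt{r}$ when $p=2$ and $G(\phi)=(-1)^{(p^{j}+1)/N}\sqrt{r}$ when $p$ is odd. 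I expect this sign extraction to be the main obstacle, since determining the exact sign is the delicate part of every explicit Gaussian-sum evaluation; a self-contained alternative would instead combine the quadratic Gauss sum with the Hasse--Davenport product formula.

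\emph{Step 3 (lifting to $\mathbb{F}_{r}$).} With the base value in hand I would pass to $\mathbb{F}_{r}=\mathbb{F}_{p^{2j\gamma}}$ via the Davenport--Hasse lifting relation
\begin{eqnarray*}
-\,G_{\mathbb{F}_{r}}\!\left(\phi\circ\mathrm{N}_{\mathbb{F}_{r}/\mathbb{F}_{p^{2j}}}\right)=\left(-\,G_{\mathbb{F}_{p^{2j}}}(\phi)\right)^{\gamma},
\end{eqnarray*}
noting that an order-$N$ character of $\mathbb{F}_{r}$ is precisely such a lift because $N\mid p^{2j}-1$. Writing the base value as $\varepsilon_{0}\sqrt{p^{2j}}$ with $\varepsilon_{0}=(-1)^{(p^{j}+1)/N}$ for odd $p$ (and $\varepsilon_{0}=1$ for $p=2$) and raising to the $\gamma$-th power, the two stray factors of $-1$ contribute $(-1)^{\gamma-1}$ while $\varepsilon_{0}^{\gamma}$ contributes $(-1)^{\gamma(p^{j}+1)/N}$; collecting exponents reproduces $G(\phi)=(-1)^{\gamma-1}\sqrt{r}$ for $p=2$ and $G(\phi)=(-1)^{\gamma-1+\gamma(p^{j}+1)/N}\sqrt{r}$ for odd $p$.

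\emph{Step 4 (the powers $\phi^{s}$).} Finally, for $1\leq s\leq N-1$ I would observe that $\phi^{s}$ is again semi-primitive with the same $j$ and the same $r$, of order $N_{s}=N/\gcd(N,s)$ (the evaluation of Steps 1--3 depends only on $r$, not on the choice of admissible $j$, as one checks directly). Applying it to $\phi^{s}$ produces the exponent $\gamma-1+\gamma(p^{j}+1)/N_{s}$, and since $(p^{j}+1)/N_{s}=\gcd(N,s)\,(p^{j}+1)/N$ I would reduce modulo $2$. In the generic configuration the $s$-dependence disappears and the sign stabilises to $(-1)^{\gamma-1}$; in the single configuration where $N$ is even and $p,\gamma,(p^{j}+1)/N$ are all odd, the exponent collapses to $\gcd(N,s)$, which has the same parity as $s$ precisely because $N$ is even, producing the advertised $(-1)^{s}$. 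Together with the quadratic-character boundary value when $N_{s}=2$, this case analysis completes the proof.
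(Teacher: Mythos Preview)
The paper does not prove this lemma at all: it is quoted verbatim from Berndt--Evans--Williams (reference [BRK]) and used as a black box, so there is no ``paper's own proof'' to compare against. Your outline is essentially the classical argument one finds in that reference: reality of $G(\phi)$ from the Frobenius symmetry $G(\phi)=G(\phi^{p^{j}})=G(\overline{\phi})$, Stickelberger (or an equivalent congruence) to pin down the sign over the base field $\mathbb{F}_{p^{2j}}$, and Davenport--Hasse to lift to $\mathbb{F}_{p^{2j\gamma}}$. The structure is sound and the parity bookkeeping in Step~4 is correct.

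One point deserves a fuller justification than your parenthetical ``as one checks directly'': when you apply Steps~1--3 to $\phi^{s}$, the \emph{least} exponent $j_{0}$ with $p^{j_{0}}\equiv-1\pmod{N_{s}}$ may be a proper divisor of $j$. You implicitly use that the sign formula is insensitive to replacing $(j_{0},\gamma_{0})$ by $(j,\gamma)$ with $j=j_{0}m$, $\gamma_{0}=m\gamma$, $m$ odd. This is true, but it hinges on two small facts you should state: (i) $m$ is odd because $2j_{0}\mid 2j$ while $2j_{0}\nmid j$, so $\gamma_{0}\equiv\gamma\pmod 2$; and (ii) for odd $p$ the cofactor $(p^{j}+1)/(p^{j_{0}}+1)=\sum_{k=0}^{m-1}(-1)^{k}p^{j_{0}k}$ is an alternating sum of $m$ odd terms, hence odd, so $(p^{j}+1)/N_{s}$ and $(p^{j_{0}}+1)/N_{s}$ share the same parity. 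With that gap filled, your argument is complete; the sign determination in Step~2 via Stickelberger remains, as you acknowledge, the only substantial ingredient.
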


The well-known quadratic Gaussian sums are determined in the following.
\begin{lemma}\label{le4}\cite[Theorem 5.15]{RH}
Suppose that $q=p^{s}$ and $\eta$ is the quadratic multiplicative character of $\mathbb{F}_{q}$, where $p$ is an odd prime. Then
\begin{eqnarray*}
G(\eta)=(-1)^{s-1}(\sqrt{p^{*}})^{s}=
\begin{cases}
(-1)^{s-1}\sqrt{q} & if ~p\equiv1\pmod4,\\
(-1)^{s-1}(\sqrt{-1})^{s}\sqrt{q} & if ~p\equiv3\pmod4,
\end{cases}
\end{eqnarray*}
where $p^{*}=(-1)^{\frac{p-1}{2}}p$.
\end{lemma}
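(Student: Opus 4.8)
The plan is to reduce the evaluation of $G(\eta)$ over $\mathbb{F}_q$ to the prime field $\mathbb{F}_p$ and then invoke the Davenport--Hasse lifting relation. Write $q=p^{s}$ and let $\eta_{0}$ denote the quadratic multiplicative character of $\mathbb{F}_{p}$. The first step is to show that $\eta=\eta_{0}\circ\Norm_{q/p}$, where $\Norm_{q/p}$ is the norm from $\mathbb{F}_q$ to $\mathbb{F}_p$. Indeed, $\Norm_{q/p}\colon\mathbb{F}_q^{*}\to\mathbb{F}_p^{*}$ is surjective, so $\eta_{0}\circ\Norm_{q/p}$ is a nontrivial character; since its square is trivial and $\mathbb{F}_q^{*}$ is cyclic with a unique character of order $2$, it must coincide with $\eta$.

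The second step, and the genuine obstacle, is the prime-field case, i.e. the evaluation of $g:=G(\eta_{0})=\sum_{x\in\mathbb{F}_p^{*}}\eta_{0}(x)\zeta_{p}^{x}$. Using the relation $G(\overline{\psi})=\psi(-1)\overline{G(\psi)}$ recorded above with $\psi=\eta_{0}=\overline{\eta_{0}}$, together with $\eta_{0}(-1)=(-1)^{(p-1)/2}$, one gets $g=(-1)^{(p-1)/2}\,\overline{g}$; combined with $\vert g\vert^{2}=p$ this yields $g^{2}=(-1)^{(p-1)/2}p=p^{*}$, so $g=\pm\sqrt{p^{*}}$ (with the convention $\sqrt{-p}:=\sqrt{-1}\sqrt{p}$). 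Fixing the sign $g=+\sqrt{p^{*}}$ is the classical theorem of Gauss; I would either cite it or supply Schur's argument, which exploits that the discrete Fourier matrix $(\zeta_{p}^{jk})_{0\le j,k\le p-1}$ has trace exactly $g$ and eigenvalues confined to $\{\pm\sqrt{p},\pm\sqrt{-1}\sqrt{p}\}$, so that computing the eigenvalue multiplicities pins down the sign. This sign determination is the delicate point: the algebraic identities above only determine $g^{2}$, and no elementary character manipulation distinguishes $\sqrt{p^{*}}$ from $-\sqrt{p^{*}}$.

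Finally I would apply the Davenport--Hasse relation to the lifted character $\eta=\eta_{0}\circ\Norm_{q/p}$, which gives $G(\eta)=(-1)^{s-1}G(\eta_{0})^{s}=(-1)^{s-1}g^{s}=(-1)^{s-1}(\sqrt{p^{*}})^{s}$, the first stated equality. Unpacking the two residue classes of $p$ then completes the proof: if $p\equiv1\pmod 4$ then $p^{*}=p$, so $(\sqrt{p^{*}})^{s}=p^{s/2}=\sqrt{q}$ and $G(\eta)=(-1)^{s-1}\sqrt{q}$; if $p\equiv3\pmod 4$ then $p^{*}=-p$ and $\sqrt{p^{*}}=\sqrt{-1}\sqrt{p}$, whence $(\sqrt{p^{*}})^{s}=(\sqrt{-1})^{s}\sqrt{q}$ and $G(\eta)=(-1)^{s-1}(\sqrt{-1})^{s}\sqrt{q}$. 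The only inputs beyond routine character arithmetic are the Davenport--Hasse theorem and Gauss's sign determination in the prime case.
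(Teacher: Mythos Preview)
Your proposal is correct and follows the standard route (prime-field sign via Gauss/Schur, then Davenport--Hasse lifting). Note, however, that the paper does not supply its own proof of this lemma: it is quoted verbatim from \cite[Theorem~5.15]{RH} and used as a black box, so there is no in-paper argument to compare against; your outline is essentially the proof one finds in that reference.
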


\section{Some exponential sums}\label{sec3}

In this section, we investigate two exponential sums which will be used to calculate the weight distribution of $\overline{\mathcal{C}_{D}}$.

Let $m$, $m_{1}$, $m_{2}$ be positive integers such that $m_{1}\mid m$, $m_{2}\mid m$. Let $\chi$ be the canonical additive character of $\mathbb{F}_{q}$. Let $\chi_{i}$ be the canonical additive characters of $\mathbb{F}_{q^{m_{i}}} , i = 1, 2$, respectively.
Let $b\in \mathbb{F}_{q^{m_1}}^*$ and $c\in \mathbb{F}_q^*$.
Denote
\begin{eqnarray*}
\Delta(b)=\sum\limits_{x\in\mathbb{F}_{q^{m}}^{*}}\sum\limits_{y,z\in\mathbb{F}_{q}^{*}}\chi_{1}(yb\mathrm{N}_{q^{m}/q^{m_{1}}}(x))\chi_{2}(z\mathrm{N}_{q^{m}/q^{m_{2}}}(x)),
\end{eqnarray*}
and
\begin{eqnarray*}
\Omega(b,c)=\sum\limits_{x\in\mathbb{F}_{q^{m}}^{*}}\sum\limits_{y,z\in\mathbb{F}_{q}^{*}}
\chi_{1}(yb\mathrm{N}_{q^{m}/q^{m_{1}}}(x))\chi(yc)\chi_{2}(z\mathrm{N}_{q^{m}/q^{m_{2}}}(x)).
\end{eqnarray*}

\begin{lemma}\label{db}\cite [Lemma 3.5]{ZQ}
Let $m$, $m_{1}$, $m_{2}$ be positive integers such that $m_{1}\mid m$, $m_{2}\mid m$. Denote $\gcd(m_{1}$, $m_{2})=e$. Let $t_{i} =\frac{q^{m_{i}}-1}{q-1} , i = 1, 2$. The value distribution of $\Delta(b), b\in\mathbb{F}_{q^{m_{1}}}^{*}$ is given as follows.\\
\begin{enumerate}[(1)]
\item If $e=1$, then $\Delta(b)=\frac{(q^{m}-1)(q-1)^{2}}{(q^{m_{1}}-1)(q^{m_{2}}-1)}$ for all $b\in\mathbb{F}_{q^{m_{1}}}^{*}$.\\
\item If $e=2$, then
\begin{eqnarray*}
\Delta(b)=
\begin{cases}
\frac{(q^{m}-1)(q-1)^{2}}{(q^{m_{1}}-1)(q^{m_{2}}-1)}\left(1+(-1)^{\frac{m_{1}+m_{2}}{2}}q^{\frac{m_{1}+m_{2}}{2}+1}\right) &  \frac{q^{m_{1}}-1}{q+1} ~times,\\
\frac{(q^{m}-1)(q-1)^{2}}{(q^{m_{1}}-1)(q^{m_{2}}-1)}\left(1+(-1)^{\frac{m_{1}+m_{2}}{2}+1}q^{\frac{m_{1}+m_{2}}{2}}\right) &  \frac{q(q^{m_{1}}-1)}{q+1} ~times.
\end{cases}
\end{eqnarray*}
\end{enumerate}
\end{lemma}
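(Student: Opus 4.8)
The plan is to evaluate $\Delta(b)$ by expanding both canonical additive characters through the Gaussian-sum (Fourier) identity \eqref{s1}, applied over $\mathbb{F}_{q^{m_1}}$ and $\mathbb{F}_{q^{m_2}}$, and then to separate the resulting triple sum over $x,y,z$. Since $\mathrm{N}_{q^m/q^{m_i}}(x)\neq 0$ for $x\in\mathbb{F}_{q^m}^*$ and $b,y,z$ are nonzero, both arguments lie in the relevant multiplicative groups, so no boundary terms appear and I would write
$$\chi_1\!\left(yb\mathrm{N}_{q^m/q^{m_1}}(x)\right)=\frac{1}{q^{m_1}-1}\sum_{\psi_1}G(\overline{\psi_1})\psi_1(y)\psi_1(b)\psi_1\!\left(\mathrm{N}_{q^m/q^{m_1}}(x)\right)$$
and analogously for $\chi_2$. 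Writing $x=\alpha^k$ with $\alpha$ primitive and using that $\mathrm{N}_{q^m/q^{m_i}}$ raises to the power $\frac{q^m-1}{q^{m_i}-1}$, each composite $\psi_i\circ\mathrm{N}_{q^m/q^{m_i}}$ is a multiplicative character of $\mathbb{F}_{q^m}^*$, so the inner sum over $x$ is an orthogonality sum that vanishes unless the product $\big(\psi_1\circ\mathrm{N}_{q^m/q^{m_1}}\big)\big(\psi_2\circ\mathrm{N}_{q^m/q^{m_2}}\big)$ is trivial on $\mathbb{F}_{q^m}^*$.

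Next I would run three orthogonality reductions. Writing $\psi_i=\lambda_i^{j_i}$, the sums $\sum_{y\in\mathbb{F}_q^*}\psi_1(y)$ and $\sum_{z\in\mathbb{F}_q^*}\psi_2(z)$ survive (and equal $q-1$) exactly when $(q-1)\mid j_1$ and $(q-1)\mid j_2$; setting $j_i=(q-1)s_i$ with $0\le s_i\le t_i-1$, triviality of the $x$-character reduces to the single congruence $\frac{s_1}{t_1}+\frac{s_2}{t_2}\in\mathbb{Z}$, hence to $\frac{s_1}{t_1}+\frac{s_2}{t_2}\in\{0,1\}$. The nontrivial solutions lie on the line $s_1 t_2+s_2 t_1=t_1 t_2$, whose count is governed by the arithmetic fact $\gcd(t_1,t_2)=\frac{q^e-1}{q-1}$, which equals $1$ when $e=1$ and $q+1$ when $e=2$. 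Collecting the surviving terms gives
$$\Delta(b)=\frac{(q-1)^2(q^m-1)}{(q^{m_1}-1)(q^{m_2}-1)}\left[1+\sum_{(s_1,s_2)}G(\overline{\psi_1})G(\overline{\psi_2})\psi_1(b)\right],$$
where the leading $1$ comes from $s_1=s_2=0$.

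For $e=1$ the proof ends at once: $\gcd(t_1,t_2)=1$ leaves no nontrivial solution, the bracket collapses to $1$, and part (1) follows. For $e=2$ there are exactly $q$ nontrivial solutions, which I would parametrize by $r\in\{1,\dots,q\}$ so that $\psi_1=\mu^r$ and $\psi_2=\nu^{-r}$, where $\mu,\nu$ are characters of order $q+1$ of $\mathbb{F}_{q^{m_1}}^*,\mathbb{F}_{q^{m_2}}^*$ and $\psi_1(b)=\mu(b)^r$. Since these characters have order dividing $q+1$ and $q=p^s\equiv-1\pmod{q+1}$, every $G(\overline{\mu^r})$ and $G(\overline{\nu^{-r}})$ is a semi-primitive Gaussian sum with $N=q+1$, least exponent $j=s$, $\gamma_i=m_i/2$ and $\frac{p^j+1}{N}=1$, so Lemma~\ref{le2} evaluates each as $\pm q^{m_1/2}$, respectively $\pm q^{m_2/2}$.

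The main obstacle is the sign bookkeeping and the final collapse of the $r$-sum. Using $\gcd(m_1/2,m_2/2)=1$, the possibilities are $p=2$, or $p$ odd with $m_1/2,m_2/2$ both odd, or $p$ odd with exactly one of them even. In the first two cases the product $G(\overline{\mu^r})G(\overline{\nu^{-r}})$ is the constant $(-1)^{(m_1+m_2)/2}q^{(m_1+m_2)/2}$, and $\sum_{r=1}^q\mu(b)^r$ equals $q$ or $-1$ according as $\mu(b)=1$ or $\mu(b)\neq1$; in the mixed case the product picks up an extra factor $(-1)^{r+1}$, and the alternating sum $\sum_{r=1}^q(-1)^{r+1}\mu(b)^r$ telescopes (via $\mu(b)^{q+1}=1$) to $1$ when $\mu(b)\neq-1$ and to $-q$ when $\mu(b)=-1$. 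In every case only two values result, and since $\mu$ has order $q+1$ the distinguished value of $\mu(b)$ (either $1$ or $-1$) is attained for $\frac{q^{m_1}-1}{q+1}$ choices of $b$, the remaining $\frac{q(q^{m_1}-1)}{q+1}$ giving the complementary value. The delicate endgame is checking that these two expressions match the stated $(-1)^{(m_1+m_2)/2}$-forms and that the frequencies come out exactly as in part (2).
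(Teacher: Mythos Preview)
Your approach is correct and is essentially the same method the paper uses. Note that the paper does not actually prove Lemma~\ref{db}; it is quoted verbatim from \cite[Lemma~3.5]{ZQ}. However, the paper does prove the companion sum $\Omega(b,c)$ in Lemmas~\ref{le3} and~\ref{ob}, and that argument is line-for-line the one you outline: Fourier-expand both additive characters via \eqref{s1}, collapse the $x$-sum by orthogonality to the constraint $\psi_1(\alpha_1)\psi_2(\alpha_2)=1$, use the $y$- and $z$-sums to restrict the character indices, and for $e=2$ reduce to a sum over characters of order $q+1$ evaluated by the semi-primitive Gaussian sums of Lemma~\ref{le2}. Your observation that $j_0=s$ (because $p^{j_0}+1\le p^s+1=N$ forces $j_0\ge s$) and hence $\gamma_i=m_i/2$, $\tfrac{p^{j_0}+1}{N}=1$, is exactly the implicit step behind the paper's case split. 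The only cosmetic difference is that the paper parametrizes the surviving pairs by $s_1+s_2=q^e-1$ with $t_i'=\tfrac{q^{m_i}-1}{q^e-1}$, while you first impose the $y,z$-orthogonality (forcing $(q-1)\mid j_i$) and then solve $s_1t_2+s_2t_1=t_1t_2$; these are equivalent and both give $\gcd(t_1,t_2)=\tfrac{q^e-1}{q-1}$ nontrivial solutions, i.e.\ none for $e=1$ and $q$ for $e=2$. Your endgame in the mixed-parity case is also right: the product is $(-1)^{r+1}q^{(m_1+m_2)/2}$, the alternating sum equals $-q$ when $\mu(b)=-1$ and $1$ otherwise, and since $(m_1+m_2)/2$ is odd there this recovers the stated $(-1)^{(m_1+m_2)/2}$-form with the correct frequencies.
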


In the following, we investigate the exponential sum $\Omega(b,c), b\in\mathbb{F}_{q^{m_{1}}}^{*}, c\in\mathbb{F}_{q}^{*}$.

\begin{lemma}\label{le3}
Let $m$, $m_{1}$, $m_{2}$ be positive integers such that $m_{1}\mid m$, $m_{2}\mid m$. Denote $\gcd(m_{1},m_{2})=e$, $\gcd(\frac{m_{2}}{e},q-1)=l$. Let $\widehat{\mathbb{F}}_{q}^{*}=\langle\lambda\rangle$, $\widehat{\mathbb{F}}_{q^{m_{i}}}^{*}=\langle\lambda_{i}\rangle$, $t_{i} =\frac{q^{m_{i}}-1}{q-1} , i = 1, 2$. For $( b,c)\in \mathbb{F}_{q^{m_{1}}}^{*}\times\mathbb{F}_{q}^{*}$, then we have
\begin{eqnarray*}
\Omega(b,c)=\frac{(q^{m}-1)(q-1)}{(q^{m_{1}}-1)(q^{m_{2}}-1)}\sum\limits_{s\in S}G(\lambda_{1}^{t_{1}s})G(\overline{\lambda}_{2}^{t_{2}s})\overline{\lambda}_{1}^{t_{1}s}(b)G(\overline{\lambda}^{\frac{m_{1}}{e}s})\lambda_{1}^{t_{1}s}(c),
\end{eqnarray*}
where $S=\left\{\frac{q-1}{l}j:j=0,1,\cdots,\frac{q^{e}-1}{q-1}l-1\right\}$.
\end{lemma}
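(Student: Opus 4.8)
The plan is to linearise each additive character via the Gaussian-sum expansion \eqref{s1} and then collapse the resulting multiple sum by orthogonality, exactly as in the evaluation of $\Delta(b)$ in Lemma \ref{db}; the only new feature is the extra factor $\chi(yc)$. Since each argument $yb\mathrm{N}_{q^{m}/q^{m_{1}}}(x)$, $yc$, $z\mathrm{N}_{q^{m}/q^{m_{2}}}(x)$ is nonzero, I may apply \eqref{s1} to each of $\chi_{1}$, $\chi$, $\chi_{2}$, introducing indices $j_{1},j_{0},j_{2}$ and Gaussian sums $G(\overline{\lambda_{1}^{j_{1}}})$, $G(\overline{\lambda^{j_{0}}})$, $G(\overline{\lambda_{2}^{j_{2}}})$. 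After expanding, the variables $b,c$ split off as $\lambda_{1}^{j_{1}}(b)\lambda^{j_{0}}(c)$ and the sums over $x$, $y$, $z$ decouple, giving
\begin{eqnarray*}
\Omega(b,c)&=&\frac{1}{(q^{m_{1}}-1)(q-1)(q^{m_{2}}-1)}\\
&&\times\sum_{j_{0},j_{1},j_{2}}G(\overline{\lambda_{1}^{j_{1}}})G(\overline{\lambda^{j_{0}}})G(\overline{\lambda_{2}^{j_{2}}})\lambda_{1}^{j_{1}}(b)\lambda^{j_{0}}(c)\,S_{x}S_{y}S_{z},
\end{eqnarray*}
where $S_{x},S_{y},S_{z}$ are the inner sums over $x\in\mathbb{F}_{q^{m}}^{*}$, $y\in\mathbb{F}_{q}^{*}$, $z\in\mathbb{F}_{q}^{*}$.

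Next I would evaluate $S_{y},S_{z},S_{x}$ by orthogonality. With the norm-compatible choices of primitive elements, $\lambda_{1}$ and $\lambda_{2}$ restrict to a common generator $\lambda$ of $\widehat{\mathbb{F}}_{q}^{*}$, so $S_{y}=\sum_{y}\lambda_{1}^{j_{1}}(y)\lambda^{j_{0}}(y)$ equals $q-1$ iff $j_{0}+j_{1}\equiv0\pmod{q-1}$, while $S_{z}=\sum_{z}\lambda_{2}^{j_{2}}(z)$ equals $q-1$ iff $j_{2}\equiv0\pmod{q-1}$. The coupling of $j_{0}$ and $j_{1}$ in $S_{y}$ is precisely the effect of the extra factor $\chi(yc)$; in $\Delta(b)$ the analogue of $S_{y}$ instead forces $j_{1}\equiv0\pmod{q-1}$. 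For $S_{x}$ I use that $\mathrm{N}_{q^{m}/q^{m_{i}}}$ is surjective, so $\lambda_{i}\circ\mathrm{N}_{q^{m}/q^{m_{i}}}=\Lambda^{(q^{m}-1)/(q^{m_{i}}-1)}$ for a fixed generator $\Lambda$ of $\widehat{\mathbb{F}}_{q^{m}}^{*}$; then $S_{x}=q^{m}-1$ iff $j_{1}\frac{q^{m}-1}{q^{m_{1}}-1}+j_{2}\frac{q^{m}-1}{q^{m_{2}}-1}\equiv0\pmod{q^{m}-1}$, and $0$ otherwise. Multiplying the surviving factor $(q-1)(q-1)(q^{m}-1)$ against the prefactor reproduces the constant $\frac{(q^{m}-1)(q-1)}{(q^{m_{1}}-1)(q^{m_{2}}-1)}$ in the statement.

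It remains to describe the surviving triples $(j_{0},j_{1},j_{2})$ and reindex them by $S$. Substituting $q-1\mid j_{2}$ into the last congruence, solvability for $j_{1}$ forces $\frac{q^{m_{2}}-1}{q^{e}-1}\mid j_{2}$; this uses $\gcd(q^{m_{1}}-1,q^{m_{2}}-1)=q^{e}-1$ together with the identity $\gcd\!\left(\frac{q^{m}-1}{q^{m_{1}}-1},\frac{q^{m}-1}{q^{m_{2}}-1}\right)=\frac{(q^{m}-1)(q^{e}-1)}{(q^{m_{1}}-1)(q^{m_{2}}-1)}$. Combining $q-1\mid j_{2}$ with $\frac{q^{m_{2}}-1}{q^{e}-1}\mid j_{2}$ and using $\gcd\!\left(q-1,\frac{q^{m_{2}}-1}{q^{e}-1}\right)=\gcd\!\left(q-1,\tfrac{m_{2}}{e}\right)=l$ (since $\frac{q^{m_{2}}-1}{q^{e}-1}\equiv\frac{m_{2}}{e}\pmod{q-1}$), I conclude that $j_{2}$ runs exactly over the multiples of $\frac{(q-1)(q^{m_{2}}-1)}{l(q^{e}-1)}$, of which there are $\frac{l(q^{e}-1)}{q-1}=|S|$. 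Each admissible $j_{2}$ determines $j_{1}$ uniquely through the last congruence and then $j_{0}$ through the first, and writing all three indices through one parameter $s\in S$ converts the factors into $G(\lambda_{1}^{t_{1}s})$, $G(\overline{\lambda}_{2}^{t_{2}s})$, $\overline{\lambda}_{1}^{t_{1}s}(b)$, $G(\overline{\lambda}^{\frac{m_{1}}{e}s})$ and $\lambda_{1}^{t_{1}s}(c)$, the last because $\lambda_{1}^{t_{1}s}$ restricted to $\mathbb{F}_{q}^{*}$ equals $\lambda^{\frac{m_{1}}{e}s}$. This is the asserted identity.

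I expect this final reparametrization to be the main obstacle: one must juggle the four moduli $q-1$, $q^{m_{1}}-1$, $q^{m_{2}}-1$, $q^{m}-1$ at once, verify that the admissible index set has cardinality exactly $|S|$, and check that after reduction the three character exponents collapse to $t_{1}s$, $t_{2}s$ and $\frac{m_{1}}{e}s$. In particular the congruences $\frac{q^{m_{i}}-1}{q^{e}-1}\equiv\frac{m_{i}}{e}\pmod{q-1}$ and the compatibility of $\lambda_{i}|_{\mathbb{F}_{q}^{*}}$ with $\lambda$ are exactly what force the $c$-factor and the $\mathbb{F}_{q}$-Gaussian-sum exponent to agree. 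By contrast, the character expansion and the three orthogonality evaluations are routine once \eqref{s1} is in hand and the pattern of Lemma \ref{db} is followed.
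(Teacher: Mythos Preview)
Your strategy is correct and matches the paper's: expand the additive characters via \eqref{s1}, evaluate the inner sums by orthogonality, and reparametrize the surviving indices. The one noteworthy difference is that you expand all three characters $\chi_{1},\chi,\chi_{2}$, introducing a third index $j_{0}$, whereas the paper expands only $\chi_{1}$ and $\chi_{2}$ and carries the factor $\chi(yc)$ along untouched; after the $x$- and $z$-sums collapse, the remaining $y$-sum $\sum_{y\in\mathbb{F}_{q}^{*}}\chi(yc)\,\overline{\lambda}_{1}^{t_{1}s}(y)$ is then recognised directly as $G(\overline{\lambda}^{\,m_{1}s/e})\,\lambda_{1}^{t_{1}s}(c)$ via the substitution $y\mapsto yc$ and the restriction $\lambda_{1}|_{\mathbb{F}_{q}^{*}}=\lambda$. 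This shortcut avoids the extra index $j_{0}$ and the coupling $j_{0}+j_{1}\equiv0\pmod{q-1}$ in your $S_{y}$, and hence most of the four-modulus juggling you flag as the main obstacle. Concretely, the paper obtains from $S_{x}$ the single relation $s_{1}+s_{2}\equiv0\pmod{q^{e}-1}$ with $j_{1}=t_{1}s_{1}$, $j_{2}=t_{2}s_{2}$ (here $t_{i}=(q^{m_{i}}-1)/(q^{e}-1)$), and then the $z$-sum alone forces $\frac{q-1}{l}\mid s_{2}$, giving the index set $S$ immediately. Your route reaches the same destination, but the paper's two-character expansion makes the reparametrization essentially a one-liner.
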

\begin{proof}
For $\mathbb{F}_{q^{m}}^{*}=\langle\alpha\rangle$, let $\alpha_{1}=\alpha^{\frac{q^{m}-1}{q^{m_{1}-1}}}$ and $\alpha_{2}=\alpha^{\frac{q^{m}-1}{q^{m_{2}-1}}}$. Then $\mathbb{F}_{q^{m_{1}}}^{*}=\langle\alpha_{1}\rangle$ and $\mathbb{F}_{q^{m_{2}}}^{*}=\langle\alpha_{2}\rangle$. We have
\begin{eqnarray*}
\Omega(b,c)&=&\sum\limits_{y,z\in\mathbb{F}_{q}^{*}}\sum_{i=0}^{q^{m}-2}\chi_{1}(yb\alpha^{\frac{q^{m}-1}{q^{m_{1}}-1}i})\chi(yc)\chi_{2}(z\alpha^{\frac{q^{m}-1}{q^{m_{2}}-1}i})\\
&=&\sum\limits_{y,z\in\mathbb{F}_{q}^{*}}\sum_{i=0}^{q^{m}-2}\chi_{1}(yb\alpha_{1}^{i})\chi(yc)\chi_{2}(z\alpha_{2}^{i})
\end{eqnarray*}
By the Fourier expansion of additive characters in Equation (\ref{s1}), we have
\begin{eqnarray*}
\Omega(b,c)&=&\frac{1}{(q^{m_{1}}-1)(q^{m_{2}}-1)}\sum\limits_{y,z\in\mathbb{F}_{q}^{*}}\chi(yc)\sum_{i=0}^{q^{m}-2}\sum\limits_{\psi_{1}\in\widehat{\mathbb{F}}_{q^{m_{1}}}^{*}}G(\overline{\psi}_{1})\psi_{1}(yb\alpha_{1}^{i})\\
&\;&\times\sum\limits_{\psi_{2}\in\widehat{\mathbb{F}}_{q^{m_{2}}}^{*}}G(\overline{\psi}_{2})\psi_{2}(z\alpha_{2}^{i})\\
&=&\frac{1}{(q^{m_{1}}-1)(q^{m_{2}}-1)}\sum\limits_{y,z\in\mathbb{F}_{q}^{*}}\chi(yc)\sum\limits_{\substack{\psi_{j}\in\widehat{\mathbb{F}}_{q^{m_{j}}}^{*}\\j=1,2}}G(\overline{\psi}_{1})G(\overline{\psi}_{2})\psi_{1}(yb)\psi_{2}(z)\\
&\;&\times\sum_{i=0}^{q^{m}-2}\psi_{1}(\alpha_{1}^{i})\psi_{2}(\alpha_{2}^{i}).
\end{eqnarray*}
Since $m_{i}\mid m$, we obtain $\ord(\psi_{i})\mid(q^{m}-1)$, where $i=1,2$. Therefore, we have
\begin{eqnarray*}
(\psi_{1}(\alpha_{1})\psi_{2}(\alpha_{2}))^{q^{m}-1}=1
\end{eqnarray*}
and
\begin{eqnarray*}
\sum_{i=0}^{q^{m}-2}\psi_{1}(\alpha_{1}^{i})\psi_{2}(\alpha_{2}^{i})&=&\sum_{i=0}^{q^{m}-2}(\psi_{1}(\alpha_{1})\psi_{2}(\alpha_{2}))^{i}\\&=&
\begin{cases}
q^{m}-1 & \mbox{if}~\psi_{1}(\alpha_{1})\psi_{2}(\alpha_{2})=1,\\
0 & \mbox{otherwise}.\\
\end{cases}
\end{eqnarray*}
Let $\widehat{\mathbb{F}}_{q^{m_{i}}}^{*}=\langle\lambda_{i}\rangle$ such that $\lambda_{i}(\alpha_{i})=\zeta_{q^{m_{i}}-1}$, where $i=1,2$. Assume that $\psi_{1}=\lambda_{1}^{u}$ and $\psi_{2}=\lambda_{2}^{v}$ for $0\leq u\leq q^{m_{1}}-2$ and $0\leq v\leq q^{m_{2}}-2$. If $\psi_{1}(\alpha_{1})\psi_{2}(\alpha_{2})=1$, then $\zeta_{q^{m_{1}}-1}^{u}\zeta_{q^{m_{2}}-1}^{v}=1$ which is equivalent to
\begin{eqnarray}
(q^{m_{2}}-1)u+(q^{m_{1}}-1)v\equiv0\pmod{(q^{m_{1}}-1)(q^{m_{2}}-1)}.
\end{eqnarray}
This implies that $(q^{m_{2}}-1)u+(q^{m_{1}}-1)v\equiv0\pmod{q^{m_{i}}-1},i=1,2$. Therefore, $(q^{m_{2}}-1)u\equiv0\pmod{q^{m_{1}}-1}$ and $(q^{m_{1}}-1)v\equiv0\pmod{q^{m_{2}}-1}$. It is known that
\begin{eqnarray*}
\gcd(q^{m_{1}}-1,q^{m_{2}}-1)=q^{\gcd(m_{1},m_{2})}-1=q^{e}-1.
\end{eqnarray*}
Then we have $u\equiv0\pmod{\frac{q^{m_{1}}-1}{q^{e}-1}}$ and $v\equiv0\pmod{\frac{q^{m_{2}}-1}{q^{e}-1}}$. Denote $u=t_{1}s_{1}$ and $v=t_{2}s_{2}$ for $0\leq s_{1}$, $s_{2}\leq q^{e}-2$, where $t_{1}=\frac{q^{m_{1}}-1}{q^{e}-1}$, $t_{2}=\frac{q^{m_{2}}-1}{q^{e}-1}$. Substituting $u=t_{1}s_{1}$, $v=t_{2}s_{2}$ into Equation(1), we have $s_{1}+s_{2}=q^{e}-1$. Hence,
\begin{eqnarray*}
\Omega(b,c)&=&\frac{q^{m}-1}{(q^{m_{1}}-1)(q^{m_{2}}-1)}\sum\limits_{y,z\in\mathbb{F}_{q}^{*}}\chi(yc)\sum_{s_{2}=0}^{q^{e}-2}G(\lambda_{1}^{t_{1}s_{2}})G(\overline{\lambda}_{2}^{t_{2}s_{2}})\overline{\lambda}_{1}^{t_{1}s_{2}}(yb)\lambda_{2}^{t_{2}s_{2}}(z)\\
&=&\frac{q^{m}-1}{(q^{m_{1}}-1)(q^{m_{2}}-1)}\sum_{s_{2}=0}^{q^{e}-2}G(\lambda_{1}^{t_{1}s_{2}})G(\overline{\lambda}_{2}^{t_{2}s_{2}})\overline{\lambda}_{1}^{t_{1}s_{2}}(b)\sum\limits_{y\in\mathbb{F}_{q}^{*}}\chi(yc)\overline{\lambda}_{1}^{t_{1}s_{2}}(y)\\
&\;&\times\sum\limits_{z\in\mathbb{F}_{q}^{*}}\lambda_{2}^{t_{2}s_{2}}(z).
\end{eqnarray*}
Assume that $\mathbb{F}_{q}^{*}=\langle\beta\rangle$, where $\beta=\alpha_{1}^{\frac{q^{m_{1}}-1}{q-1}}=\alpha_{2}^{\frac{q^{m_{2}}-1}{q-1}}$. Hence, $\lambda_{1}(\beta)=\lambda_{2}(\beta)=\zeta_{q-1}$.\\
Since $\gcd(\frac{m_{2}}{e},q-1)=l$, we have
\begin{eqnarray*}
\sum\limits_{z\in\mathbb{F}_{q}^{*}}\lambda_{2}^{t_{2}s_{2}}(z)&=&\sum_{i=0}^{q-2}\lambda_{2}^{t_{2}s_{2}}(\beta^{i})=\sum_{i=0}^{q-2}(\zeta_{q-1}^{\frac{m_{2}s_{2}}{e}})^{i}\\&=&
\begin{cases}
q-1 & \mbox{if}~s_{2}\equiv0\pmod{\frac{q-1}{l}},\\
0 & \mbox{otherwise}.\\
\end{cases}
\end{eqnarray*}
Denote $S=\{s_{2}:s_{2}\equiv0\pmod{\frac{q-1}{l}}, 0\leq s_{2}\leq q^{e}-2\}$. Let $\widehat{\mathbb{F}}_{q}^{*}=\langle\lambda\rangle$. Since $\lambda_{1}(z)=\lambda(z)$ for $z\in\mathbb{F}_{q}$, we have that
\begin{eqnarray*}
\Omega(b,c)&=&\frac{(q^{m}-1)(q-1)}{(q^{m_{1}}-1)(q^{m_{2}}-1)}\sum\limits_{s_{2}\in S}G(\lambda_{1}^{t_{1}s_{2}})G(\overline{\lambda}_{2}^{t_{2}s_{2}})\overline{\lambda}_{1}^{t_{1}s_{2}}(b)\sum\limits_{y\in\mathbb{F}_{q}^{*}}\chi(yc)\overline{\lambda}_{1}^{t_{1}s_{2}}(y)\\
&=&\frac{(q^{m}-1)(q-1)}{(q^{m_{1}}-1)(q^{m_{2}}-1)}\sum\limits_{s_{2}\in S}G(\lambda_{1}^{t_{1}s_{2}})G(\overline{\lambda}_{2}^{t_{2}s_{2}})\overline{\lambda}_{1}^{t_{1}s_{2}}(b)\sum\limits_{y\in\mathbb{F}_{q}^{*}}\chi(yc)\overline{\lambda}^{\frac{m_{1}}{e}s_{2}}(y)\\
&=&\frac{(q^{m}-1)(q-1)}{(q^{m_{1}}-1)(q^{m_{2}}-1)}\sum\limits_{s_{2}\in S}G(\lambda_{1}^{t_{1}s_{2}})G(\overline{\lambda}_{2}^{t_{2}s_{2}})\overline{\lambda}_{1}^{t_{1}s_{2}}(b)G(\overline{\lambda}^{\frac{m_{1}}{e}s_{2}})\lambda_{1}^{t_{1}s_{2}}(c).
\end{eqnarray*}
The proof is completed.
\end{proof}

By Lemma \ref{le3}, the value distribution of $\Omega(b,c)$ can be determined if the Gauss sums are known. In the following, we mainly consider some special cases to give the value distribution of $\Omega(b,c)$.

\begin{lemma}\label{ob}
Let $l=1$ and other notations and hypothesises be the same as those of Lemma \ref{le3}. If $(b,c)$ runs through $\mathbb{F}_{q^{m_{1}}}^{*}\times\mathbb{F}_{q}^{*}$, then the value distribution of $\Omega(b,c), b\in \mathbb{F}_{q^{m_{1}}}^{*}, c\in \mathbb{F}_{q}^{*}$ is given as follows.\\
\begin{enumerate}[(1)]
\item If $e=1$, then
\begin{eqnarray*}
\Omega(b,c)=\frac{-(q^{m}-1)(q-1)}{(q^{m_{1}}-1)(q^{m_{2}}-1)}, b\in \mathbb{F}_{q^{m_{1}}}^{*}, c\in \mathbb{F}_{q}^{*}.
\end{eqnarray*}
\item If $e=2$, then
\begin{eqnarray*}
\Omega(b,c)=
\begin{cases}
\frac{-(q^{m}-1)(q-1)}{(q^{m_{1}}-1)(q^{m_{2}}-1)}\left(1+(-1)^{\frac{m_{1}+m_{2}}{2}}q^{\frac{m_{1}+m_{2}}{2}+1}\right) & \frac{(q^{m_{1}}-1)(q-1)}{q+1} ~times,\\
\frac{-(q^{m}-1)(q-1)}{(q^{m_{1}}-1)(q^{m_{2}}-1)}\left(1+(-1)^{\frac{m_{1}+m_{2}}{2}+1}q^{\frac{m_{1}+m_{2}}{2}}\right) & \frac{q(q^{m_{1}}-1)(q-1)}{q+1} ~times.
\end{cases}
\end{eqnarray*}
\end{enumerate}
\end{lemma}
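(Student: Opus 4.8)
\emph{Plan.} The idea is to feed $l=1$ into the general evaluation of Lemma~\ref{le3} and then compute the surviving Gaussian sums with the semi-primitive formula of Lemma~\ref{le2}. First I would record what $l=1$ does to the index set: here $S=\{(q-1)j:0\le j\le \frac{q^{e}-1}{q-1}-1\}$, so every $s\in S$ is a multiple of $q-1$. This trivializes the $\mathbb{F}_q$-data in Lemma~\ref{le3}. Indeed, for $s=(q-1)j$ the exponent $\frac{m_1}{e}s$ is a multiple of $q-1$, hence $\overline{\lambda}^{\frac{m_1}{e}s}$ is the trivial character of $\mathbb{F}_q^{*}$ and $G(\overline{\lambda}^{\frac{m_1}{e}s})=G(\psi_0)=-1$; likewise $\lambda_1^{t_1s}(c)=\lambda^{\frac{m_1}{e}s}(c)=1$ for all $c\in\mathbb{F}_q^{*}$. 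When $e=1$ the set $S=\{0\}$ is a singleton and the single term is $G(\psi_0)^3=-1$, which together with the prefactor yields part~(1) at once.

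For $e=2$ the set $S=\{(q-1)j:0\le j\le q\}$ has $q+1$ elements. I would introduce the characters $\phi:=\lambda_1^{(q^{m_1}-1)/(q+1)}$ and $\phi_2:=\lambda_2^{(q^{m_2}-1)/(q+1)}$ of order $N:=q+1$ over $\mathbb{F}_{q^{m_1}}$ and $\mathbb{F}_{q^{m_2}}$ respectively (well defined since $2\mid m_i$ forces $q+1\mid q^{m_i}-1$), and check that for $s=(q-1)j$ one has $\lambda_1^{t_1s}=\phi^{j}$ and $\overline{\lambda}_2^{t_2s}=\overline{\phi}_2^{\,j}$. Pulling out the overall sign $-1$ produced by $G(\overline{\lambda}^{\frac{m_1}{e}s})$ and separating the term $j=0$ (whose contribution is $G(\psi_0)^2=1$), Lemma~\ref{le3} becomes
\[
\Omega(b,c)=\frac{-(q^{m}-1)(q-1)}{(q^{m_1}-1)(q^{m_2}-1)}\left(1+\sum_{j=1}^{q}G(\phi^{j})G(\overline{\phi}_2^{\,j})\,\overline{\phi}^{\,j}(b)\right).
\]
The task is then to evaluate the inner sum. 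Applying Lemma~\ref{le2} with $N=q+1=p^{s}+1$ (so $p^{s}\equiv-1\pmod N$ with least exponent $\tau=s$), $q^{m_i}=p^{2s\gamma_i}$ with $\gamma_i=m_i/2$, and $\frac{p^{s}+1}{N}=1$, each $G(\phi^{j})$ and $G(\overline{\phi}_2^{\,j})=G(\phi_2^{\,N-j})$ equals $\pm q^{m_i/2}$, so the product is $\pm q^{(m_1+m_2)/2}$.

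The delicate step — and the place I expect the real work to lie — is the sign bookkeeping across parity cases, using that $e=\gcd(m_1,m_2)=2$ gives $\gcd(\gamma_1,\gamma_2)=1$, so $\gamma_1,\gamma_2$ are not both even. When $p=2$ (then $N$ is odd so the exceptional branch of Lemma~\ref{le2} never triggers) or when $p$ is odd with $\gamma_1,\gamma_2$ both odd, the product $G(\phi^{j})G(\overline{\phi}_2^{\,j})=(-1)^{(m_1+m_2)/2}q^{(m_1+m_2)/2}$ is independent of $j$; when $p$ is odd and exactly one of $\gamma_1,\gamma_2$ is even, it becomes $(-1)^{j+1}q^{(m_1+m_2)/2}$, which does depend on $j$. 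These two scenarios look different, but I would then collapse each inner sum by orthogonality: in the first scenario $\sum_{j=0}^{N-1}\overline{\phi}^{\,j}(b)=N$ if $\phi(b)=1$ and $0$ otherwise, while in the second, writing $(-1)^{j}\overline{\phi}^{\,j}(b)=(-\overline{\phi}(b))^{j}$ and using $-1\in\mu_N$ (as $N$ is even here), $\sum_{j=0}^{N-1}(-\overline{\phi}(b))^{j}=N$ if $\phi(b)=-1$ and $0$ otherwise.

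The final paragraph would observe that, although the distinguished condition is $\phi(b)=1$ in one scenario and $\phi(b)=-1$ in the other, in both cases the inner sum equals $(-1)^{(m_1+m_2)/2}q^{(m_1+m_2)/2+1}$ on a single coset of $\ker\phi$ and $(-1)^{(m_1+m_2)/2+1}q^{(m_1+m_2)/2}$ off it; hence the bracket takes exactly the two values $1+(-1)^{(m_1+m_2)/2}q^{(m_1+m_2)/2+1}$ and $1+(-1)^{(m_1+m_2)/2+1}q^{(m_1+m_2)/2}$. Since $\phi$ has order $q+1$, the distinguished coset has size $\frac{q^{m_1}-1}{q+1}$ and its complement size $\frac{q(q^{m_1}-1)}{q+1}$; multiplying these counts by the $q-1$ choices of $c$ and by the prefactor $\frac{-(q^{m}-1)(q-1)}{(q^{m_1}-1)(q^{m_2}-1)}$ reproduces the two values with multiplicities $\frac{(q^{m_1}-1)(q-1)}{q+1}$ and $\frac{q(q^{m_1}-1)(q-1)}{q+1}$, which is part~(2). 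The key point making the statement uniform is precisely that these two scenarios, despite splitting $b$ along different cosets, yield identical value distributions.
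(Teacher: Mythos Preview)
Your proposal is correct and follows essentially the same route as the paper: plug $l=1$ into Lemma~\ref{le3}, use the semi-primitive Gauss sum formula of Lemma~\ref{le2} for the order-$(q+1)$ characters, and finish with an orthogonality/parity case analysis. Your version is in fact slightly cleaner than the paper's, since you observe at the outset that $\lambda_1^{t_1s}(c)=\lambda^{\frac{m_1}{e}s}(c)=1$ for $s\in S$, making the $c$-independence of $\Omega(b,c)$ transparent; the paper instead carries $c$ through the computation (writing $c=\alpha_1^{s_2}$ and tracking $s_2-s_1\pmod{q+1}$), which amounts to the same thing but obscures this point.
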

\begin{proof}
If $l=1$, by Lemma \ref{le3}, we have that $G(\overline{\lambda}^{\frac{m_{1}}{e}s_{2}})=-1$ and
\begin{eqnarray*}
\Omega(b,c)=\frac{-(q^{m}-1)(q-1)}{(q^{m_{1}}-1)(q^{m_{2}}-1)}\sum\limits_{s\in S}G(\lambda_{1}^{t_{1}s})G(\overline{\lambda}_{2}^{t_{2}s})\overline{\lambda}_{1}^{t_{1}s}(b)\lambda_{1}^{t_{1}s}(c),
\end{eqnarray*}
where $S=\left\{(q-1)j:j=0, 1, \ldots, \frac{q^{e}-1}{q-1}-1\right\}$. In the following, we discuss the value distribution of the exponential sum
$\Omega(b,c)$ for $e = 1, 2$, respectively.\\
\begin{enumerate}[(1)]
\item Assume that $e=1$. It is clear that $S=\{0\}$. Then
\begin{eqnarray*}
\Omega(b,c)=\frac{-(q^{m}-1)(q-1)}{(q^{m_{1}}-1)(q^{m_{2}}-1)}, b\in \mathbb{F}_{q^{m_{1}}}^{*}, c\in \mathbb{F}_{q}^{*}.
\end{eqnarray*}
\item Assume that $e=2$. Then we have $S=\{(q-1)j:j=0, 1, \cdots, q\}$. Hence,
\end{enumerate}
\begin{eqnarray*}
\Omega(b,c)&=&\frac{-(q^{m}-1)(q-1)}{(q^{m_{1}}-1)(q^{m_{2}}-1)}\sum_{j=0}^{q}G(\lambda_{1}^{\frac{q^{m_{1}}-1}{q+1}j})G(\overline{\lambda}_{2}^{\frac{q^{m_{2}}-1}{q+1}j})\overline{\lambda}_{1}^{\frac{q^{m_{1}}-1}{q+1}j}(b)\lambda_{1}^{\frac{q^{m_{1}}-1}{q+1}j}(c)\\
&=&\frac{-(q^{m}-1)(q-1)}{(q^{m_{1}}-1)(q^{m_{2}}-1)}
\bigg(1+\sum_{j=1}^{q}G(\lambda_{1}^{\frac{q^{m_{1}}-1}{q+1}j})G(\overline{\lambda}_{2}^{\frac{q^{m_{2}}-1}{q+1}j})\overline{\lambda}_{1}^{\frac{q^{m_{1}}-1}{q+1}j}(b)\\
&\;&\times \lambda_{1}^{\frac{q^{m_{1}}-1}{q+1}j}(c)\bigg).
\end{eqnarray*}
Note that $\mathrm{ord}(\lambda_{1}^{\frac{q^{m_{1}}-1}{q+1}})=\mathrm{ord}(\lambda_{2}^{\frac{q^{m_{2}}-1}{q+1}})=q+1$. Now we give the value distribution of $\Omega(b,c)$ in several cases.\\
\begin{itemize}
\item If $q$ is even, by Lemma \ref{le2} we have
\begin{eqnarray*}
G(\lambda_{1}^{\frac{q^{m_{1}}-1}{q+1}j})G(\overline{\lambda}_{2}^{\frac{q^{m_{2}}-1}{q+1}j})=(-1)^{\frac{m_{1}+m_{2}}{2}}q^{\frac{m_{1}+m_{2}}{2}}, 1\leq j\leq q.
\end{eqnarray*}
Then
\begin{eqnarray*}
\Omega(b,c)&=&\frac{-(q^{m}-1)(q-1)}{(q^{m_{1}}-1)(q^{m_{2}}-1)}\Bigg(1+(-1)^{\frac{m_{1}+m_{2}}{2}}q^{\frac{m_{1}+m_{2}}{2}}
\sum_{j=1}^{q}\overline{\lambda}_{1}^{\frac{q^{m_{1}}-1}{q+1}j}(b)\\
&\;&\times \lambda_{1}^{\frac{q^{m_{1}}-1}{q+1}j}(c)\Bigg).
\end{eqnarray*}
Let $b=\alpha_{1}^{s_{1}}, c=\alpha_{1}^{s_{2}}, 0\leq s_{1}\leq q^{m_{1}}-2, 0\leq s_{2}\leq q-2$. Then we have
\begin{eqnarray*}
\sum_{j=1}^{q}\overline{\lambda}_{1}^{\frac{q^{m_{1}}-1}{q+1}j}(b)\lambda_{1}^{\frac{q^{m_{1}}-1}{q+1}j}(c)
&=&\sum_{j=1}^{q}\zeta_{q+1}^{j(s_{2}-s_{1})}\\
&=&\begin{cases}
q & if ~s_{2}-s_{1}\equiv0\pmod{q+1},\\
-1 & otherwise.
\end{cases}
\end{eqnarray*}
Hence, the value distribution of $\Omega(b,c)$ is
\begin{eqnarray*}
\Omega(b,c)=
\begin{cases}
\frac{-(q^{m}-1)(q-1)}{(q^{m_{1}}-1)(q^{m_{2}}-1)}\left(1+(-1)^{\frac{m_{1}+m_{2}}{2}}q^{\frac{m_{1}+m_{2}}{2}+1}\right) & \frac{(q^{m_{1}}-1)(q-1)}{q+1} ~times,\\
\frac{-(q^{m}-1)(q-1)}{(q^{m_{1}}-1)(q^{m_{2}}-1)}\left(1+(-1)^{\frac{m_{1}+m_{2}}{2}+1}q^{\frac{m_{1}+m_{2}}{2}}\right) & \frac{q(q^{m_{1}}-1)(q-1)}{q+1} ~times.
\end{cases}
\end{eqnarray*}
\item If $q$ is odd and $m_{1}\equiv0\pmod4$, we have $m_{2}\equiv2\pmod4$ due to $\gcd(\frac{m_{2}}{2},q-1)=1$. Since $\frac{m_{2}}{2}$ is odd and $\frac{m_{1}}{2}$ is even, by Lemma \ref{le2} we have
\begin{eqnarray*}
G(\lambda_{1}^{\frac{q^{m_{1}}-1}{q+1}j})G(\overline{\lambda}_{2}^{\frac{q^{m_{2}}-1}{q+1}j})=(-1)^{1+j}q^{\frac{m_{1}+m_{2}}{2}}, 1\leq j\leq q.
\end{eqnarray*}
For $b=\alpha_{1}^{s_{1}}, c=\alpha_{1}^{s_{2}}, 0\leq s_{1}\leq q^{m_{1}}-2, 0\leq s_{2}\leq q-2$, we have
\begin{eqnarray*}
\Omega(b,c)&=&\frac{-(q^{m}-1)(q-1)}{(q^{m_{1}}-1)(q^{m_{2}}-1)}\left(1+q^{\frac{m_{1}+m_{2}}{2}}\sum_{j=1}^{q}(-1)^{1+j}\overline{\lambda}_{1}^{\frac{q^{m_{1}}-1}{q+1}j}(b)\right.\\
&\;&\left.\times\lambda_{1}^{\frac{q^{m_{1}}-1}{q+1}j}(c)\right)\\
&=&\frac{-(q^{m}-1)(q-1)}{(q^{m_{1}}-1)(q^{m_{2}}-1)}\left(1+q^{\frac{m_{1}+m_{2}}{2}}\sum_{j=1}^{q}(-1)^{1+j}\zeta_{q+1}^{j(s_{2}-s_{1})}\right).
\end{eqnarray*}
For $s_{2}-s_{1}\equiv0\pmod{q+1}$, we have
\begin{eqnarray*}
\sum_{j=1}^{q}(-1)^{1+j}\zeta_{q+1}^{j(s_{2}-s_{1})}=1
\end{eqnarray*}
and
\begin{eqnarray*}
\Omega(b,c)=\frac{-(q^{m}-1)(q-1)}{(q^{m_{1}}-1)(q^{m_{2}}-1)}\left(1+q^{\frac{m_{1}+m_{2}}{2}}\right).
\end{eqnarray*}
For $s_{2}-s_{1}\equiv\frac{q+1}{2}\pmod{q+1}$, we have
\begin{eqnarray*}
\sum_{j=1}^{q}(-1)^{1+j}\zeta_{q+1}^{j(s_{2}-s_{1})}=-q
\end{eqnarray*}
and
\begin{eqnarray*}
\Omega(b,c)=\frac{-(q^{m}-1)(q-1)}{(q^{m_{1}}-1)(q^{m_{2}}-1)}\left(1-q^{\frac{m_{1}+m_{2}}{2}+1}\right).
\end{eqnarray*}
For $s_{2}-s_{1}\not\equiv0,\frac{q+1}{2}\pmod{q+1}$, one can see that
\begin{eqnarray*}
\zeta_{q+1}^{s_{2}-s_{1}}+\zeta_{q+1}^{3(s_{2}-s_{1})}+\cdots+\zeta_{q+1}^{q(s_{2}-s_{1})}=0
\end{eqnarray*}
and
\begin{eqnarray*}
\zeta_{q+1}^{2(s_{2}-s_{1})}+\zeta_{q+1}^{4(s_{2}-s_{1})}+\cdots+\zeta_{q+1}^{(q-1)(s_{2}-s_{1})}=-1.
\end{eqnarray*}
This implies that
\begin{eqnarray*}
\sum_{j=1}^{q}(-1)^{1+j}\zeta_{q+1}^{j(s_{2}-s_{1})}=1
\end{eqnarray*}
and
\begin{eqnarray*}
\Omega(b,c)=\frac{-(q^{m}-1)(q-1)}{(q^{m_{1}}-1)(q^{m_{2}}-1)}\left(1+q^{\frac{m_{1}+m_{2}}{2}}\right).
\end{eqnarray*}
Hence, the value distribution of $\Omega(b,c)$ is
\begin{eqnarray*}
\Omega(b,c)=
\begin{cases}
\frac{-(q^{m}-1)(q-1)}{(q^{m_{1}}-1)(q^{m_{2}}-1)}\left(1-q^{\frac{m_{1}+m_{2}}{2}+1}\right) & \frac{(q^{m_{1}}-1)(q-1)}{q+1} ~times,\\
\frac{-(q^{m}-1)(q-1)}{(q^{m_{1}}-1)(q^{m_{2}}-1)}\left(1+q^{\frac{m_{1}+m_{2}}{2}}\right) & \frac{q(q^{m_{1}}-1)(q-1)}{q+1} ~times.
\end{cases}
\end{eqnarray*}
\item If $q$ is odd and $m_{1}\equiv2\pmod4$, we have $m_{2}\equiv2\pmod4$ due to $\gcd(\frac{m_{2}}{2},q-1)=1$. Since $\frac{m_{2}}{2}$ is odd and $\frac{m_{1}}{2}$ is odd, by Lemma \ref{le2} we have
\end{itemize}
\begin{eqnarray*}
G(\lambda_{1}^{\frac{q^{m_{1}}-1}{q+1}j})G(\overline{\lambda}_{2}^{\frac{q^{m_{2}}-1}{q+1}j})=(-1)^{2j}q^{\frac{m_{1}+m_{2}}{2}}=q^{\frac{m_{1}+m_{2}}{2}}, 1\leq j\leq q.
\end{eqnarray*}
For $b=\alpha_{1}^{s_{1}}, c=\alpha_{1}^{s_{2}}, 0\leq s_{1}\leq q^{m_{1}}-2, 0\leq s_{2}\leq q-2$, we have
\begin{eqnarray*}
\Omega(b,c)&=&\frac{-(q^{m}-1)(q-1)}{(q^{m_{1}}-1)(q^{m_{2}}-1)}\left(1+q^{\frac{m_{1}+m_{2}}{2}}\sum_{j=1}^{q}\overline{\lambda}_{1}^{\frac{q^{m_{1}}-1}{q+1}j}(b)\right)\\
&=&\frac{-(q^{m}-1)(q-1)}{(q^{m_{1}}-1)(q^{m_{2}}-1)}\left(1+q^{\frac{m_{1}+m_{2}}{2}}\sum_{j=1}^{q}\zeta_{q+1}^{j(s_{2}-s_{1})}\lambda_{1}^{\frac{q^{m_{1}}-1}{q+1}j}(c)\right).
\end{eqnarray*}
For $(s_{2}-s_{1})\equiv0\pmod{q+1}$, we have
\begin{eqnarray*}
\sum_{j=1}^{q}\zeta_{q+1}^{j(s_{2}-s_{1})}=q
\end{eqnarray*}
and
\begin{eqnarray*}
\Omega(b,c)=\frac{-(q^{m}-1)(q-1)}{(q^{m_{1}}-1)(q^{m_{2}}-1)}\left(1+q^{\frac{m_{1}+m_{2}}{2}+1}\right).
\end{eqnarray*}
For $(s_{2}-s_{1})\equiv\frac{q+1}{2}\pmod{q+1}$, we have
\begin{eqnarray*}
\sum_{j=1}^{q}\zeta_{q+1}^{j(s_{2}-s_{1})}=-1
\end{eqnarray*}
and
\begin{eqnarray*}
\Omega(b,c)=\frac{-(q^{m}-1)(q-1)}{(q^{m_{1}}-1)(q^{m_{2}}-1)}\left(1-q^{\frac{m_{1}+m_{2}}{2}}\right).
\end{eqnarray*}
For $(s_{2}-s_{1})\not\equiv0,\frac{q+1}{2}\pmod{q+1}$, one can see that
\begin{eqnarray*}
\zeta_{q+1}^{s_{2}-s_{1}}+\zeta_{q+1}^{3(s_{2}-s_{1})}+\cdots+\zeta_{q+1}^{q(s_{2}-s_{1})}=0
\end{eqnarray*}
and
\begin{eqnarray*}
\zeta_{q+1}^{2(s_{2}-s_{1})}+\zeta_{q+1}^{4(s_{2}-s_{1})}+\cdots+\zeta_{q+1}^{(q-1)(s_{2}-s_{1})}=-1.
\end{eqnarray*}
This implies that
\begin{eqnarray*}
\sum_{j=1}^{q}\zeta_{q+1}^{j(s_{2}-s_{1})}=-1
\end{eqnarray*}
and
\begin{eqnarray*}
\Omega(b,c)=\frac{-(q^{m}-1)(q-1)}{(q^{m_{1}}-1)(q^{m_{2}}-1)}\left(1-q^{\frac{m_{1}+m_{2}}{2}}\right).
\end{eqnarray*}
Hence, the value distribution of $\Omega(b,c)$ is
\begin{eqnarray*}
\Omega(b,c)=
\begin{cases}
\frac{-(q^{m}-1)(q-1)}{(q^{m_{1}}-1)(q^{m_{2}}-1)}\left(1+q^{\frac{m_{1}+m_{2}}{2}+1}\right) & \frac{(q^{m_{1}}-1)(q-1)}{q+1} times,\\
\frac{-(q^{m}-1)(q-1)}{(q^{m_{1}}-1)(q^{m_{2}}-1)}\left(1-q^{\frac{m_{1}+m_{2}}{2}}\right) & \frac{q(q^{m_{1}}-1)(q-1)}{q+1} times.
\end{cases}
\end{eqnarray*}
Note that the value distribution of $\Omega(b,c)$ can be represented in a unified form for $e=2$. The proof is completed.
\end{proof}

\begin{lemma}\label{le5}
Let $l=2, e=1$ and other notations and hypothesises be the same as those of Lemma \ref{le3}. If $( b,c)$ runs through $\mathbb{F}_{q^{m_{1}}}^{*}\times\mathbb{F}_{q}^{*}$, then the value distribution of $\Omega(b,c), b\in\mathbb{F}_{q^{m_{1}}}^{*}, c\in \mathbb{F}_{q}^{*}$is given as follows.
\begin{eqnarray*}
\Omega(b,c)=
\begin{cases}
\frac{(q^{m}-1)(q-1)}{(q^{m_{1}}-1)(q^{m_{2}}-1)}\left(-1-q^{\frac{m_{1}+m_{2}+1}{2}}\right) & \frac{(q^{m_{1}}-1)(q-1)}{2} times,\\
\frac{(q^{m}-1)(q-1)}{(q^{m_{1}}-1)(q^{m_{2}}-1)}\left(-1+q^{\frac{m_{1}+m_{2}+1}{2}}\right) & \frac{(q^{m_{1}}-1)(q-1)}{2} times.
\end{cases}
\end{eqnarray*}
\end{lemma}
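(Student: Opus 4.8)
The plan is to specialize the general formula for $\Omega(b,c)$ from Lemma \ref{le3} to the parameters $l=2$, $e=1$ and then read off the value distribution. First I would make the index set $S$ explicit. With $e=1$ we have $t_i=\frac{q^{m_i}-1}{q-1}$, and $S=\{\frac{q-1}{l}j:j=0,1\}=\{0,\tfrac{q-1}{2}\}$, so the sum in Lemma \ref{le3} has only two terms. Before evaluating them I record the arithmetic consequences of the hypotheses: since $l=\gcd(\frac{m_2}{e},q-1)=\gcd(m_2,q-1)=2$, the field size $q$ is odd and $m_2$ is even; moreover $m_1$ is odd (this is exactly what makes the exponent $\frac{m_1+m_2+1}{2}$ an integer, and it is what the $s=\frac{q-1}{2}$ term below requires).

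Next I would evaluate the two terms. The term $s=0$ contributes $G(\lambda_1^0)G(\overline{\lambda}_2^0)G(\overline{\lambda}^0)=(-1)^3=-1$, since all three characters are trivial and the character values on $b,c$ are $1$. For $s=\frac{q-1}{2}$ the exponents become $t_1s=\frac{q^{m_1}-1}{2}$, $t_2s=\frac{q^{m_2}-1}{2}$ and $\frac{m_1}{e}s=\frac{m_1(q-1)}{2}$, so $\lambda_1^{t_1s}$, $\lambda_2^{t_2s}$, $\lambda^{\frac{m_1}{e}s}$ are the quadratic characters $\eta_1,\eta_2,\eta$ of $\mathbb{F}_{q^{m_1}},\mathbb{F}_{q^{m_2}},\mathbb{F}_q$ respectively; here I use that $m_1$ is odd to reduce $\lambda^{m_1(q-1)/2}$ to $\lambda^{(q-1)/2}=\eta$, and that the quadratic character is real so that conjugation may be dropped. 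Restricting to $\mathbb{F}_q^*$ and using that $\frac{q^{m_1}-1}{q-1}$ is odd (again because $m_1$ is odd) gives $\lambda_1^{t_1s}(c)=\eta(c)$ for $c\in\mathbb{F}_q^*$, while $\overline{\lambda}_1^{t_1s}(b)=\eta_1(b)$. Hence
\begin{eqnarray*}
\Omega(b,c)=\frac{(q^m-1)(q-1)}{(q^{m_1}-1)(q^{m_2}-1)}\left(-1+G(\eta)G(\eta_1)G(\eta_2)\,\eta_1(b)\eta(c)\right).
\end{eqnarray*}

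It then remains to analyze $C:=G(\eta)G(\eta_1)G(\eta_2)$ together with the sign $\eta_1(b)\eta(c)$. By Lemma \ref{le4} each quadratic Gaussian sum has absolute value the square root of its field size, so $|C|=\sqrt{q}\cdot\sqrt{q^{m_1}}\cdot\sqrt{q^{m_2}}=q^{\frac{m_1+m_2+1}{2}}$. I would also verify that $C$ is real: writing $q=p^s$, Lemma \ref{le4} expresses $C$ as a product of signs times $(\sqrt{p^*})^{s(1+m_1+m_2)}$, and since $1+m_1+m_2$ is even this equals $(p^*)^{s(m_1+m_2+1)/2}$, a real number. Thus $C=\pm q^{\frac{m_1+m_2+1}{2}}$. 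Finally, as $(b,c)$ runs through $\mathbb{F}_{q^{m_1}}^*\times\mathbb{F}_q^*$, the sign $\eta_1(b)\eta(c)$ equals $+1$ exactly when $b$ and $c$ are both squares or both nonsquares and $-1$ otherwise; a direct count gives each value $\frac{(q^{m_1}-1)(q-1)}{2}$ times. Substituting $\eta_1(b)\eta(c)=\pm1$ into the displayed formula produces the two values
\begin{eqnarray*}
\frac{(q^m-1)(q-1)}{(q^{m_1}-1)(q^{m_2}-1)}\left(-1\pm q^{\frac{m_1+m_2+1}{2}}\right),
\end{eqnarray*}
each occurring $\frac{(q^{m_1}-1)(q-1)}{2}$ times, which is the claimed distribution.

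The main obstacle is the careful identification of the characters and parities in the $s=\frac{q-1}{2}$ term, in particular verifying that the restriction of $\lambda_1^{t_1s}$ to $\mathbb{F}_q^*$ is the quadratic character $\eta$ (which relies crucially on $m_1$ being odd) and confirming that $C$ is genuinely real rather than purely imaginary. A pleasant feature that simplifies the endgame is that the exact sign of $C$ is irrelevant: whether $C=+q^{\frac{m_1+m_2+1}{2}}$ or $C=-q^{\frac{m_1+m_2+1}{2}}$, the unordered pair of attained values and their equal multiplicities are unchanged, so only the magnitude $|C|$ and the reality of $C$ are needed.
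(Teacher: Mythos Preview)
Your proposal is correct and follows essentially the same approach as the paper: specialize Lemma~\ref{le3} to $S=\{0,\tfrac{q-1}{2}\}$, identify the nontrivial characters as quadratic, and apply Lemma~\ref{le4}. The only cosmetic difference is that the paper writes out the explicit sign of $G(\eta)G(\eta_1)G(\eta_2)$, whereas you observe more elegantly that this product is real and that its sign is irrelevant to the value distribution since $\eta_1(b)\eta(c)$ already takes both signs equally often.
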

\begin{proof}
Since $l=2, e=1$, by Lemma \ref{le3} we have that
\begin{eqnarray*}
\Omega(b,c)=\frac{(q^{m}-1)(q-1)}{(q^{m_{1}}-1)(q^{m_{2}}-1)}\sum\limits_{s\in S}G(\lambda_{1}^{t_{1}s})G(\overline{\lambda}_{2}^{t_{2}s})\overline{\lambda}_{1}^{t_{1}s}(b)G(\overline{\lambda}^{m_{1}s})\lambda_{1}^{t_{1}s}(c),
\end{eqnarray*}
where $S=\left\{0, \frac{q-1}{2}\right\}$. It is clear that $m_{1}$ is odd and $m_{2}$ is even. Hence, by Lemma \ref{le4},
\begin{eqnarray*}
\Omega(b,c)&=&\frac{(q^{m}-1)(q-1)}{(q^{m_{1}}-1)(q^{m_{2}}-1)}\left(-1+G(\lambda_{1}^{\frac{q^{m_{1}}-1}{2}})G(\overline{\lambda}_{2}^{\frac{q^{m_{2}}-1}{2}})\overline{\lambda}_{1}^{\frac{q^{m_{1}}-1}{2}}(b)G(\overline{\lambda}^{\frac{q-1}{2}m_{1}})\right.\\
&\;&\left.\times\lambda_{1}^{\frac{q^{m_{1}}-1}{2}}(c)\right)\\
&=&\frac{(q^{m}-1)(q-1)}{(q^{m_{1}}-1)(q^{m_{2}}-1)}\left(-1+G(\lambda_{1}^{\frac{q^{m_{1}}-1}{2}})G(\overline{\lambda}_{2}^{\frac{q^{m_{2}}-1}{2}})\overline{\lambda}_{1}^{\frac{q^{m_{1}}-1}{2}}(b)G(\overline{\lambda}^{\frac{q-1}{2}})\right.\\
&\;&\left.\times\lambda_{1}^{\frac{q^{m_{1}}-1}{2}}(c)\right)\\
&=&\frac{(q^{m}-1)(q-1)}{(q^{m_{1}}-1)(q^{m_{2}}-1)}\left(-1-(-1)^{\frac{(p-1)(m_{1}+m_{2}+1)t}{4}}q^{\frac{m_{1}+m_{2}+1}{2}}\overline{\lambda}_{1}^{\frac{q^{m_{1}}-1}{2}}(b)\right.\\
&\;&\left.\lambda_{1}^{\frac{q^{m_{1}}-1}{2}}(c)\right)\\
&=&
\begin{cases}
\frac{(q^{m}-1)(q-1)}{(q^{m_{1}}-1)(q^{m_{2}}-1)}\left(-1-q^{\frac{m_{1}+m_{2}+1}{2}}\right) & \frac{(q^{m_{1}}-1)(q-1)}{2} times,\\
\frac{(q^{m}-1)(q-1)}{(q^{m_{1}}-1)(q^{m_{2}}-1)}\left(-1+q^{\frac{m_{1}+m_{2}+1}{2}}\right) & \frac{(q^{m_{1}}-1)(q-1)}{2} times,
\end{cases}
\end{eqnarray*}
where $( b,c)$ runs through $\mathbb{F}_{q^{m_{1}}}^{*}\times\mathbb{F}_{q}^{*}$,
\end{proof}
\section{The weight distribution and the locality of $\overline{\mathcal{C}_{D}}$}\label{sec4}

In this section, we will first  give the weight distribution of $\overline{\mathcal{C}_{D}}$ in some special cases. Then we will determine the locality of $\overline{\mathcal{C}_{D}}$.

\subsection{The weight distribution of $\overline{\mathcal{C}_{D}}$}
Since the norm function
\begin{eqnarray*}
\mathrm{N}_{q^{m}/q^{m_{2}}}:\mathbb{F}_{q^{m}}^*\rightarrow\mathbb{F}_{q^{m_{2}}}^*,x\mapsto y=x^{\frac{q^{m}-1}{q^{m_{2}}-1}},
\end{eqnarray*}
is an epimorphism of two multiplicative groups and the trace function $\mathrm{Tr}_{q^{m_{2}}/q}:\mathbb{F}_{q^{m_{2}}}\rightarrow\mathbb{F}_{q}$ is an epimorphism of two additive groups, the length of $\overline{\mathcal{C}_{D}}$ is equal to
\begin{eqnarray*}
n&=&\mid D \mid=\mid \text{Ker}(\mathrm{N}_{q^{m}/q^{m_{2}}})\mid \cdot(\mid \text{Ker}(\mathrm{Tr}_{q^{m_{2}}/q})\mid-1)+1\\
&=&\frac{(q^{m}-1)(q^{m_{2}}-q)}{q(q^{m_{2}}-1)}+1.
\end{eqnarray*}

In the following, we determine the weight distribution of $\overline{\mathcal{C}_{D}}$.

For each $b\in\mathbb{F}_{q^{m_{1}}}^{*}$, $c=0$, let
\begin{eqnarray*}
N_{b}=\sharp
\left\{\begin{array}{l}
x\in\mathbb{F}_{q^{m}}:\mathrm{Tr}_{q^{m_{1}}/q}(b\mathrm{N}_{q^{m}/q^{m_{1}}}(x))=0\\
~~~~~~~~~~~~~\text{and\quad$\mathrm{Tr}_{q^{m_{2}}/q}(\mathrm{N}_{q^{m}/q^{m_{2}}}(x))=0$}
\end{array}\right\}.
\end{eqnarray*}
From \cite[Section 4.1]{ZQ}, we have
\begin{eqnarray}\label{Nb}
N_{b}=\frac{q^{m}-1}{q^{2}}\left(1-\frac{q-1}{q^{m_{1}}-1}-\frac{q-1}{q^{m_{2}}-1}\right)+1+\frac{1}{q^{2}}\Delta(b).
\end{eqnarray}

For each $b\in\mathbb{F}_{q^{m_{1}}}^{*}$,$c\in\mathbb{F}_{q}^{*}$, denote by
\begin{eqnarray*}
N_{b,c}=\sharp
\left\{\begin{array}{l}
x\in\mathbb{F}_{q^{m}}:\mathrm{Tr}_{q^{m_{1}}/q}(b\mathrm{N}_{q^{m}/q^{m_{1}}}(x))+c=0\\
~~~~~~~~~~~~~\text{and\quad $\mathrm{Tr}_{q^{m_{2}}/q}(\mathrm{N}_{q^{m}/q^{m_{2}}}(x))=0$}
\end{array}\right\}.
\end{eqnarray*}
By the orthogonal relation of additive characters, for each $b\in\mathbb{F}_{q^{m_{1}}}^{*}$,$c\in\mathbb{F}_{q}^{*}$ we have
\begin{eqnarray*}
N_{b,c}&=&\frac{1}{q^{2}}\sum\limits_{x\in\mathbb{F}_{q^{m}}}
\sum\limits_{y\in\mathbb{F}_{q}}\zeta_{p}^{\mathrm{Tr}_{q/p}(y\mathrm{Tr}_{q^{m_{1}}/q}(b\mathrm{N}_{q^{m}/q^{m_{1}}}(x))+yc)}\\
&\;&\times \sum\limits_{z\in\mathbb{F}_{q}}\zeta_{p}^{\mathrm{Tr}_{q/p}(z\mathrm{Tr}_{q^{m_{2}}/q}(\mathrm{N}_{q^{m}/q^{m_{2}}}(x)))}\\
&=&\frac{1}{q^{2}}\sum\limits_{x\in\mathbb{F}_{q^{m}}}
\sum\limits_{y\in\mathbb{F}_{q}}\zeta_{p}^{\mathrm{Tr}_{q^{m_{1}}/p}(yb\mathrm{N}_{q^{m}/q^{m_{1}}}(x))+\mathrm{Tr}_{q/p}(yc)}
\sum\limits_{z\in\mathbb{F}_{q}}\zeta_{p}^{\mathrm{Tr}_{q^{m_{2}}/p}(z\mathrm{N}_{q^{m}/q^{m_{2}}}(x))}\\
&=&\frac{1}{q^{2}}\sum\limits_{x\in\mathbb{F}_{q^{m}}}\sum\limits_{y\in\mathbb{F}_{q}}\chi_{1}(yb\mathrm{N}_{q^{m}/q^{m_{1}}}(x))\chi(yc)
\sum\limits_{z\in\mathbb{F}_{q}}\chi_{2}(z\mathrm{N}_{q^{m}/q^{m_{2}}}(x))\\
&=&\frac{1}{q^{2}}+\frac{1}{q^{2}}\sum\limits_{x\in\mathbb{F}_{q^{m}}^{*}}1+\frac{1}{q^{2}}\sum\limits_{y\in\mathbb{F}_{q}^{*}}\chi(yc)
+\frac{1}{q^{2}}\sum\limits_{z\in\mathbb{F}_{q}^{*}}1\\
&\;&+\frac{1}{q^{2}}\sum\limits_{x\in\mathbb{F}_{q^{m}}^{*}}\sum\limits_{y\in\mathbb{F}_{q}^{*}}\chi_{1}(yb\mathrm{N}_{q^{m}/q^{m_{1}}}(x))\chi(yc)\\
&\;&+\frac{1}{q^{2}}\sum\limits_{x\in\mathbb{F}_{q^{m}}^{*}}\sum\limits_{z\in\mathbb{F}_{q}^{*}}\chi_{2}(z\mathrm{N}_{q^{m}/q^{m_{2}}}(x))
+\frac{1}{q^{2}}\sum\limits_{y\in\mathbb{F}_{q}^{*}}\sum\limits_{z\in\mathbb{F}_{q}^{*}}\chi(yc)\\
&\;&+\frac{1}{q^{2}}\sum\limits_{x\in\mathbb{F}_{q^{m}}^{*}}\sum\limits_{y,z\in\mathbb{F}_{q}^{*}}
\chi_{1}(yb\mathrm{N}_{q^{m}/q^{m_{1}}}(x))\chi(yc)\chi_{2}(z\mathrm{N}_{q^{m}/q^{m_{2}}}(x))\\
&=&\frac{q^{m}-1}{q^{2}}+\frac{1}{q^{2}}\sum\limits_{x\in\mathbb{F}_{q^{m}}^{*}}\sum\limits_{y\in\mathbb{F}_{q}^{*}}\chi_{1}(yb\mathrm{N}_{q^{m}/q^{m_{1}}}(x))\chi(yc)\\
&\;&+\frac{1}{q^{2}}\sum\limits_{x\in\mathbb{F}_{q^{m}}^{*}}\sum\limits_{z\in\mathbb{F}_{q}^{*}}\chi_{2}(z\mathrm{N}_{q^{m}/q^{m_{2}}}(x))+\frac{1}{q^{2}}\Omega(b,c).
\end{eqnarray*}
It is obvious that
\begin{eqnarray*}
\sum\limits_{x\in\mathbb{F}_{q^{m}}^{*}}\sum\limits_{y\in\mathbb{F}_{q}^{*}}\chi_{1}(yb\mathrm{N}_{q^{m}/q^{m_{1}}}(x))\chi(yc)
&=&\sum\limits_{x\in\mathbb{F}_{q^{m}}^{*}}\sum\limits_{y\in\mathbb{F}_{q}^{*}}\chi_{1}(ybx^{\frac{q^{m}-1}{q^{m_{1}}-1}})\chi(yc)\\
&=&\frac{q^{m}-1}{q^{m_{1}}-1}\sum\limits_{y\in\mathbb{F}_{q}^{*}}\chi(yc)\sum\limits_{x\in\mathbb{F}_{q^{m_{1}}}^{*}}\chi_{1}(ybx)\\
&=&\frac{q^{m}-1}{q^{m_{1}}-1},
\end{eqnarray*}
and
\begin{eqnarray*}
\sum\limits_{x\in\mathbb{F}_{q^{m}}^{*}}\sum\limits_{z\in\mathbb{F}_{q}^{*}}\chi_{2}(z\mathrm{N}_{q^{m}/q^{m_{2}}}(x))
&=&\sum\limits_{x\in\mathbb{F}_{q^{m}}^{*}}\sum\limits_{z\in\mathbb{F}_{q}^{*}}\chi_{2}(zx^{\frac{q^{m}-1}{q^{m_{2}}-1}})\\
&=&\frac{q^{m}-1}{q^{m_{2}}-1}\sum\limits_{z\in\mathbb{F}_{q}^{*}}\sum\limits_{x\in\mathbb{F}_{q^{m_{2}}}^{*}}\chi_{2}(zx)\\
&=&-(q-1)\frac{q^{m}-1}{q^{m_{2}}-1}.
\end{eqnarray*}
Then we have
\begin{eqnarray}\label{Nbc}
N_{b,c}=\frac{q^{m}-1}{q^{2}}\left(1+\frac{1}{q^{m_{1}}-1}-\frac{q-1}{q^{m_{2}}-1}\right)+\frac{1}{q^{2}}\Omega(b,c).
\end{eqnarray}

\begin{theorem}\label{thf}
Let $m$, $m_{1}$, $m_{2}$ be positive integers such that $m_{1}\mid m$, $m_{2}\mid m$, where $\gcd(m_{1},m_{2})=e=1$, $\gcd(\frac{m_{2}}{e},q-1)=l=1$. Then $\overline{\mathcal{C}_{D}}$ is a three-weight $\left[\frac{(q^{m}-1)(q^{m_{2}}-q)}{q(q^{m_{2}}-1)}+1,m_{1}+1\right]$ linear code and its weight distribution is given in Table 1. In particular, $\overline{\mathcal{C}_{D}}$ is a $q$-divisible self-orthogonal code if $m_{1}\geq2, m_{2}\geq2$ and $q$ is an odd prime power.

\begin{table}[h]
\begin{center}
\begin{minipage}{220pt}
\centering{\caption{Weight distribution of $\overline{\mathcal{C}_{D}}$ for $e=l=1$.}\label{tab1}}
\begin{tabular}{@{}cc@{}}
\toprule
Weight & Frequency  \\
\midrule
0    & 1  \\
$\frac{(q^{m}-1)(q^{m_{2}}-q)}{q(q^{m_{2}}-1)}+1$    & $q-1$\\
$\frac{q^{m_{1}-1}(q^{m}-1)(q-1)(q^{m_{2}-1}-1)}{(q^{m_{1}}-1)(q^{m_{2}}-1)}$    & $q^{m_{1}}-1$   \\
$\frac{(q^{m}-1)(q^{m_{2}-1}-1)(q^{m_{1}}-q^{m_{1}-1}-1)}{(q^{m_{1}}-1)(q^{m_{2}}-1)}+1$  & $(q^{m_{1}}-1)(q-1)$\\
\bottomrule
\end{tabular}
\end{minipage}
\end{center}
\end{table}
\end{theorem}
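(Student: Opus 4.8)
The plan is to read off each codeword's Hamming weight from its number of zero coordinates, sort the $q^{m_1+1}$ parameter pairs $(b,c)\in\mathbb{F}_{q^{m_1}}\times\mathbb{F}_q$ into the four natural classes, and then deduce divisibility and self-orthogonality from a single congruence together with Lemma \ref{th-selforthogonal}.

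First I would write a generic codeword as $\mathbf{c}_{b,c}=\left(\mathrm{Tr}_{q^{m_1}/q}(b\mathrm{N}_{q^m/q^{m_1}}(x))+c\right)_{x\in D}$, so that its weight is $n$ minus the number of $x\in D$ at which the entry vanishes. For $b=0,\,c\neq0$ the word is $c\mathbf{1}$, of weight $n$, contributing frequency $q-1$. For $b\neq0,\,c=0$ the number of zero coordinates is exactly $N_b$ of (\ref{Nb}); inserting $\Delta(b)=\frac{(q^m-1)(q-1)^2}{(q^{m_1}-1)(q^{m_2}-1)}$ from Lemma \ref{db}(1) (legitimate since $e=1$) and using that the $\Delta(b)$-term combines with the main term via $1-\frac{q-1}{q^{m_1}-1}-\frac{q-1}{q^{m_2}-1}+\frac{(q-1)^2}{(q^{m_1}-1)(q^{m_2}-1)}=\bigl(1-\frac{q-1}{q^{m_1}-1}\bigr)\bigl(1-\frac{q-1}{q^{m_2}-1}\bigr)$ collapses $N_b$ to $\frac{(q^m-1)(q^{m_1-1}-1)(q^{m_2-1}-1)}{(q^{m_1}-1)(q^{m_2}-1)}+1$; subtracting from $n$ gives the third-row weight, with frequency $q^{m_1}-1$. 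For $b\neq0,\,c\neq0$ the count is $N_{b,c}$ of (\ref{Nbc}), into which I substitute the constant value of $\Omega(b,c)$ from Lemma \ref{ob}(1); this produces the fourth-row weight, with frequency $(q^{m_1}-1)(q-1)$.

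Since these frequencies sum to $q^{m_1+1}$, the dimension follows once I verify that the $\mathbb{F}_q$-linear map $(b,c)\mapsto\mathbf{c}_{b,c}$ is injective: a kernel element evaluated at $x=0\in D$ forces $c=0$, after which $\mathbf{c}_{b,0}$ would have weight $0$, impossible for $b\neq0$ as the third weight is strictly positive (the natural non-degeneracy assumption $m_2\geq2$ enters here). Hence $\dim\overline{\mathcal{C}_{D}}=m_1+1$, and checking from the explicit formulas that the three nonzero weights are pairwise distinct finishes the three-weight claim. For divisibility the key device is to write all three nonzero weights through the single quantity $T=\frac{(q^m-1)(q^{m_2-1}-1)}{(q^{m_1}-1)(q^{m_2}-1)}$, namely $w_2=n=(q^{m_1}-1)T+1$, $w_3=q^{m_1-1}(q-1)T$, and $w_4=(q^{m_1}-q^{m_1-1}-1)T+1$. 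I would first show $T\in\mathbb{Z}$: writing $F(X)=\frac{(X^m-1)(X^{m_2-1}-1)}{(X^{m_1}-1)(X^{m_2}-1)}$ as a ratio of products of cyclotomic polynomials $\Phi_d$, a multiplicity count using $m_1\mid m$, $m_2\mid m$ and $\gcd(m_1,m_2)=1$ shows that each $\Phi_d$ occurs in the numerator with multiplicity at least that in the denominator, so $F\in\mathbb{Z}[X]$ and $T=F(q)$.

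Because $F$ has integer coefficients and $F(0)=\frac{(-1)(-1)}{(-1)(-1)}=1$, we get $T=F(q)\equiv F(0)=1\pmod q$. Then $w_3=q^{m_1-1}(q-1)T\equiv0\pmod q$ as $m_1\geq2$, while $w_2\equiv w_4\equiv -T+1\equiv0\pmod q$, so $\overline{\mathcal{C}_{D}}$ is $q$-divisible. Finally $\mathbf{1}\in\overline{\mathcal{C}_{D}}$ by taking $b=0,\,c=1$, and $q$-divisibility forces $p$-divisibility since $p\mid q$; as $q$ is an odd prime power, Lemma \ref{th-selforthogonal} yields self-orthogonality. I expect the divisibility step to be the main obstacle: the individual weight expressions are opaque modulo $q$, and the clean route is exactly to isolate the common factor $T$, to prove its integrality by the cyclotomic multiplicity count (where $\gcd(m_1,m_2)=1$ is indispensable, the count failing when $e\geq2$), and then to evaluate it modulo $q$ via $F(0)=1$.
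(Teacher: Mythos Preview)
Your proof follows essentially the same route as the paper for the weight distribution: split into the four cases for $(b,c)$, plug the values of $\Delta(b)$ and $\Omega(b,c)$ from Lemmas~\ref{db}(1) and~\ref{ob}(1) into (\ref{Nb}) and (\ref{Nbc}), and read off the weights and frequencies; the paper likewise infers the dimension from the fact that the zero word occurs only once. Where you go further is in the divisibility step: the paper simply asserts ``$q$-divisible\ldots\ by the weight distribution,'' whereas you factor each nonzero weight through the single integer $T=\frac{(q^m-1)(q^{m_2-1}-1)}{(q^{m_1}-1)(q^{m_2}-1)}$, prove $T\in\mathbb{Z}$ by a cyclotomic multiplicity count (using $\gcd(m_1,m_2)=1$ to rule out double divisors $d\ge 2$), and obtain $T\equiv F(0)=1\pmod q$, from which $w_2\equiv w_4\equiv -T+1\equiv 0$ and $q\mid w_3$ for $m_1\ge 2$ follow cleanly. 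This is a genuine supplement rather than a different strategy: it fills in the one nontrivial verification the paper omits, and it has the pleasant byproduct that the three nonzero weights are $(q^{m_1}-1)T+1$, $q^{m_1-1}(q-1)T$, $(q^{m_1}-q^{m_1-1}-1)T+1$, whose pairwise differences $(q^{m_1-1}-1)T+1$, $q^{m_1-1}T$, $T-1$ make the distinctness check transparent once $T>1$.
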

\begin{proof}
By Lemma \ref{db} and Lemma \ref{ob}, we know that
\begin{eqnarray*}
\Delta(b)=\frac{(q^{m}-1)(q-1)^{2}}{(q^{m_{1}}-1)(q^{m_{2}}-1)},
\end{eqnarray*}
and
\begin{eqnarray*}
\Omega(b,c)=\frac{-(q^{m}-1)(q-1)}{(q^{m_{1}}-1)(q^{m_{2}}-1)},
\end{eqnarray*}
where $\gcd(m_{1},m_{2})=e=1$, $\gcd(\frac{m_{2}}{e},q-1)=l=1$. For any codeword
$$\textbf{c}_{b,c}=\left(\mathrm{Tr}_{q^{m_{1}}/q}(b\mathrm{N}_{q^{m}/q^{m_{1}}}(x)\right)_{x\in D}+c\mathbf{1}\in \overline{\mathcal{C}_{D}},$$
let $\text{wt}(\textbf{c}_{b,c})$ denote its Hamming weight.\\
Case 1. If $b=0,c=0$, then we have
\begin{eqnarray*}
\text{wt}(\textbf{c}_{b,c})=0.
\end{eqnarray*}
Case 2. If $b=0,c\neq0$, then we have
\begin{eqnarray*}
\text{wt}(\textbf{c}_{b,c})=n=\frac{(q^{m}-1)(q^{m_{2}}-q)}{q(q^{m_{2}}-1)}+1.
\end{eqnarray*}
Case 3. If $b\neq0,c=0$, then by Equation (\ref{Nb}) we have
\begin{eqnarray*}
\text{wt}(\textbf{c}_{b,c})
&=&n-N_{b}\\
&=&\frac{q^{m_{1}-1}(q^{m}-1)(q-1)(q^{m_{2}-1}-1)}{(q^{m_{1}}-1)(q^{m_{2}}-1)}.
\end{eqnarray*}
Case 4. If $b\neq0,c\neq0$, then by Equation (\ref{Nbc}) we have
\begin{eqnarray*}
\text{wt}(\textbf{c}_{b,c})
&=&n-N_{b,c}\\
&=&\frac{(q^{m}-1)(q^{m_{2}-1}-1)(q^{m_{1}}-q^{m_{1}-1}-1)}{(q^{m_{1}}-1)(q^{m_{2}}-1)}+1.
\end{eqnarray*}
Then the weight distribution follows. The dimension of $\overline{\mathcal{C}_{D}}$ is $m_{1}+1$ as the zero codeword occurs once when $(b,c)$ runs through $\mathbb{F}_{q^{m_1}} \times \mathbb{F}_{q}$. In particular, $\overline{\mathcal{C}_{D}}$ is a $q$-divisible self-orthogonal code if $m_{1}\geq2, m_{2}\geq2$ by the weight distribution and Lemma \ref{th-selforthogonal}.
\end{proof}

\begin{example}
 Let $m=6$, $m_{1}=2$, $m_{2}=3$ and $q=3$. Then $\overline{\mathcal{C}_{D}}$ is a $3$-divisible self-orthogonal code with parameters $[225,3,141]$
and weight enumerator $1+16z^{141}+8z^{168}+2z^{225}$. It is confirmed to be correct by Magma.
\end{example}

\begin{theorem}\label{ths}
Let $m$, $m_{1}$, $m_{2}$ be positive integers such that $m_{1}\mid m$, $m_{2}\mid m$, where $\gcd(m_{1},m_{2})=e=2$, $\gcd(\frac{m_{2}}{e},q-1)=l=1$, $(m_{1},m_{2})\neq(2,2)$. Then $\overline{\mathcal{C}_{D}}$ is a five-weight linear code with parameters $\left[\frac{(q^{m}-1)(q^{m_{2}}-q)}{q(q^{m_{2}}-1)}+1,m_{1}+1\right]$ and its weight distribution is given in Table \ref{tab2}. If $m=m_{1}, m_{2}=2$ and $q>2$, the minimum distance of $\overline{\mathcal{C}_{D}}^{\perp}$ is $3$. If $m=m_{1}>4, m_{2}=2$ and $q=2$, then $\overline{\mathcal{C}_{D}}^{\perp}$ is distance-optimal according to the sphere-packing bound with minimum distance $4$.  Besides, $\overline{\mathcal{C}_{D}}$ is a $q$-divisible self-orthogonal code if $m_{1}\geq2, m_{2}\geq2, m_{1}+m_{2}\geq6$ and $q$ is an odd prime power.

\begin{table}[h]
\begin{center}
\begin{minipage}{330pt}
\centering{\caption{Weight distribution of $\overline{\mathcal{C}_{D}}$ for $e=2,l=1$.}\label{tab2}}%
\begin{tabular}{@{}cc@{}}
\toprule
Weight & Frequency  \\
\midrule
0    & 1  \\
$\frac{(q^{m}-1)(q^{m_{2}}-q)}{q(q^{m_{2}}-1)}+1$  & $q-1$  \\
$\frac{(q^{m}-1)(q-1)\left(q^{m_{1}-1}(q^{m_{2}-1}-1)-(-1)^{\frac{m_{1}+m_{2}}{2}}(q-1)q^{\frac{m_{1}+m_{2}}{2}-1}\right)}{(q^{m_{1}}-1)(q^{m_{2}}-1)}$    & $\frac{q^{m_{1}}-1}{q+1}$  \\
$\frac{(q^{m}-1)(q-1)\left(q^{m_{1}-1}(q^{m_{2}-1}-1)-(-1)^{\frac{m_{1}+m_{2}}{2}+1}(q-1)q^{\frac{m_{1}+m_{2}}{2}-2}\right)}{(q^{m_{1}}-1)(q^{m_{2}}-1)}$    & $\frac{q(q^{m_{1}}-1)}{q+1}$  \\
$\frac{(q^{m}-1)\left((q^{m_{2}-1}-1)(q^{m_{1}}-q^{m_{1}-1}-1)+(-1)^{\frac{m_{1}+m_{2}}{2}}(q-1)q^{\frac{m_{1}+m_{2}}{2}-1}\right)}{(q^{m_{1}}-1)(q^{m_{2}}-1)}+1$    & $\frac{(q^{m_{1}}-1)(q-1)}{q+1}$   \\
$\frac{(q^{m}-1)\left((q^{m_{2}-1}-1)(q^{m_{1}}-q^{m_{1}-1}-1)+(-1)^{\frac{m_{1}+m_{2}}{2}+1}(q-1)q^{\frac{m_{1}+m_{2}}{2}-2}\right)}{(q^{m_{1}}-1)(q^{m_{2}}-1)}+1$    & $\frac{q(q^{m_{1}}-1)(q-1)}{q+1}$   \\
\bottomrule
\end{tabular}
\end{minipage}
\end{center}
\end{table}
\end{theorem}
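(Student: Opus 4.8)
The plan is to imitate the proof of Theorem \ref{thf} in its first half, now feeding the $e=2$ value distributions into the counting formulas, and then to handle the dual-distance and optimality assertions by fresh arguments. Writing $\mathbf{c}_{b,c}=\left(\mathrm{Tr}_{q^{m_{1}}/q}(b\mathrm{N}_{q^{m}/q^{m_{1}}}(x))\right)_{x\in D}+c\mathbf{1}$, its weight is $\text{wt}(\mathbf{c}_{b,c})=n-(\text{number of zero coordinates})$, and I would split into the same four cases according to whether $b$ and $c$ vanish. The cases $b=0$ contribute weight $0$ once and weight $n$ exactly $q-1$ times. For $b\neq0,\,c=0$ I insert the two values of $\Delta(b)$ from Lemma \ref{db}(2) into (\ref{Nb}) and set $\text{wt}=n-N_{b}$; for $b\neq0,\,c\neq0$ I insert the two values of $\Omega(b,c)$ from Lemma \ref{ob}(2) into (\ref{Nbc}) and set $\text{wt}=n-N_{b,c}$. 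Simplifying yields precisely the four nonzero rows of Table \ref{tab2} other than $n$, with the frequencies read off from the multiplicities in Lemmas \ref{db} and \ref{ob}.

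To finish the parameters I would check that the five nonzero weights are pairwise distinct: $n$ is the full length, while the two $c=0$ weights (resp. the two $c\neq0$ weights) differ only in their $q^{(m_{1}+m_{2})/2-1}$ versus $q^{(m_{1}+m_{2})/2-2}$ terms, and the hypothesis $(m_{1},m_{2})\neq(2,2)$ is exactly what keeps these separated and nonzero. Since the frequencies total $q^{m_{1}+1}$ and weight $0$ occurs only for $(b,c)=(0,0)$, the evaluation map is injective and $\dim\overline{\mathcal{C}_{D}}=m_{1}+1$. For divisibility and self-orthogonality, I would reduce each weight modulo $q$: using $m_{1}\geq2$, $m_{2}\geq2$ and $m_{1}+m_{2}\geq6$, every term in the bracketed numerators carries a factor $q$ while the denominator $(q^{m_{1}}-1)(q^{m_{2}}-1)$ is coprime to $q$, and the trailing $+1$ in the $c\neq0$ weights cancels the residue $-1$ of the fraction; hence all weights are $q$-divisible. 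As $\mathbf{1}=\mathbf{c}_{0,1}\in\overline{\mathcal{C}_{D}}$ and the code is $q$-divisible, Lemma \ref{th-selforthogonal} gives self-orthogonality for odd prime power $q$.

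The main work, and the genuinely new part, is the dual distance. I would use that $d(\overline{\mathcal{C}_{D}}^{\perp})$ equals the least number of $\mathbb{F}_{q}$-linearly dependent columns of a generator matrix of $\overline{\mathcal{C}_{D}}$, whose column at $x\in D$ behaves as $(\mathrm{N}_{q^{m}/q^{m_{1}}}(x),1)$, the last coordinate coming from the augmentation. When $m=m_{1}$ the norm is the identity, so the columns are $(x,1)$, $x\in D$; the constant last entry immediately forces every single column and every pair of distinct columns to be independent, so $d^{\perp}\geq3$ always. For $q>2$ I would exploit that $\mathrm{Tr}_{q^{m_{2}}/q}(\mathrm{N}_{q^{m}/q^{m_{2}}}(tx))=t^{(q^{m}-1)/(q^{m_{2}}-1)}\,\mathrm{Tr}_{q^{m_{2}}/q}(\mathrm{N}_{q^{m}/q^{m_{2}}}(x))$ for $t\in\mathbb{F}_{q}$, so $D$ is a union of lines through the origin; taking $x\in D\setminus\{0\}$ and $t\in\mathbb{F}_{q}\setminus\{0,1\}$ gives three distinct points $0,x,tx\in D$ whose columns $(0,1),(x,1),(tx,1)$ are dependent, whence $d^{\perp}=3$. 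For $q=2$ the last entry gives $1+1+1=1\neq0$, ruling out any three dependent columns, so $d^{\perp}\geq4$; a pigeonhole count on the $\binom{|D|}{2}>2^{m}-1$ pair-sums (valid for $m>4$, since $|D|=\frac{2^{m}+2}{3}$) yields two distinct, hence disjoint, pairs with equal sum, i.e. four distinct elements of $D$ summing to $0$, so $d^{\perp}=4$.

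Finally, distance-optimality of the binary $[n,\,n-m-1,\,4]$ dual follows from the sphere-packing bound: substituting $n=\frac{2^{m}+2}{3}$ gives $1+n+\binom{n}{2}=\frac{2^{2m}+7\cdot2^{m}+28}{18}$, which exceeds $2^{m+1}$ precisely when $2^{m}>28$, i.e. $m>4$; thus no $[n,\,n-m-1,\,5]$ binary code exists and the minimum distance $4$ is optimal. I expect the delicate points to be confirming the five weights are genuinely distinct (so that the code really has five weights) and pinning down the exact pigeonhole threshold guaranteeing four distinct elements of $D$ with zero sum; the remainder is a faithful repetition of the computation in Theorem \ref{thf}.
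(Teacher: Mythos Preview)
Your treatment of the weight distribution, dimension, $q$-divisibility and self-orthogonality follows the paper's proof essentially verbatim: the same four-case split on $(b,c)$, the same use of Lemmas \ref{db}(2) and \ref{ob}(2) fed into (\ref{Nb}) and (\ref{Nbc}), and the same appeal to Lemma \ref{th-selforthogonal}.

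Where you genuinely diverge is in the dual-distance argument. The paper determines $d(\overline{\mathcal{C}_{D}}^{\perp})$ by brute force via the Pless power moments: it writes down the second through fifth moment identities, plugs in the five weights $\omega_{1},\ldots,\omega_{5}$ and their multiplicities, and solves a $4\times 4$ linear system for $A_{1}^{\perp},A_{2}^{\perp},A_{3}^{\perp},A_{4}^{\perp}$, obtaining closed-form expressions that are then checked to be zero or positive in the relevant regimes. Your approach is instead structural: for $m=m_{1}$ you identify the columns of a generator matrix with pairs $(x,1)$, $x\in D$, reduce dual-weight questions to finding short $\mathbb{F}_{q}$-relations $\sum\lambda_{j}x_{j}=0$, $\sum\lambda_{j}=0$, and then exhibit or exclude such relations directly (using Lemma \ref{kx} to produce the dependent triple $0,x,tx$ when $q>2$, the parity of the augmentation row to rule out weight-$3$ dual words when $q=2$, and a pigeonhole on the $\binom{|D|}{2}$ pair-sums to produce a weight-$4$ dual word once $m>4$). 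This buys you a short, transparent argument that avoids the heavy symbolic computation of the power moments and explains \emph{why} the thresholds $q>2$ and $m>4$ appear; the paper's approach, by contrast, yields the exact values of $A_{3}^{\perp}$ and $A_{4}^{\perp}$ as a by-product, which your method does not. Both are correct; yours is more elementary, the paper's more informative about the dual weight enumerator.
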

\begin{proof}
By Lemma \ref{db} and Lemma \ref{ob}, we know that
\begin{eqnarray*}
\Delta(b)=\frac{(q^{m}-1)(q-1)^{2}}{(q^{m_{1}}-1)(q^{m_{2}}-1)}
\end{eqnarray*}
and
\begin{eqnarray*}
\Omega(b,c)=
\begin{cases}
\frac{-(q^{m}-1)(q-1)}{(q^{m_{1}}-1)(q^{m_{2}}-1)}\left(1+(-1)^{\frac{m_{1}+m_{2}}{2}}q^{\frac{m_{1}+m_{2}}{2}+1}\right) & \frac{(q^{m_{1}}-1)(q-1)}{q+1} ~times,\\
\frac{-(q^{m}-1)(q-1)}{(q^{m_{1}}-1)(q^{m_{2}}-1)}\left(1+(-1)^{\frac{m_{1}+m_{2}}{2}+1}q^{\frac{m_{1}+m_{2}}{2}}\right) & \frac{q(q^{m_{1}}-1)(q-1)}{q+1} ~times,
\end{cases}
\end{eqnarray*}
where $\gcd(m_{1},m_{2})=e=2$, $\gcd(\frac{m_{2}}{e},q-1)=l=1$. For any codeword
$$\textbf{c}_{b,c}=\left(\mathrm{Tr}_{q^{m_{1}}/q}(b\mathrm{N}_{q^{m}/q^{m_{1}}}(x)\right)_{x\in D}+c\mathbf{1}\in \overline{\mathcal{C}_{D}},$$
let $\text{wt}(\textbf{c}_{b,c})$ denote its Hamming weight.\\
Case 1. If $b=0,c=0$, then we have
\begin{eqnarray*}
\text{wt}(\textbf{c}_{b,c})=0.
\end{eqnarray*}
Case 2. If $b=0,c\neq0$, then we have
\begin{eqnarray*}
\text{wt}(\textbf{c}_{b,c})=n=\frac{(q^{m}-1)(q^{m_{2}}-q)}{q(q^{m_{2}}-1)}+1.
\end{eqnarray*}
Case 3. If $b\neq0,c=0$, then by Equation (\ref{Nb}) we have
\begin{eqnarray*}
\text{wt}(\textbf{c}_{b,c})
&=&n-N_{b}\\
&=&
\left\{\begin{array}{ll}
\frac{(q^{m}-1)(q-1)\left(q^{m_{1}-1}(q^{m_{2}-1}-1)-(-1)^{\frac{m_{1}+m_{2}}{2}}(q-1)q^{\frac{m_{1}+m_{2}}{2}-1}\right)}{(q^{m_{1}}-1)(q^{m_{2}}-1)}\\ ~~~~~~~~~~~~~~~~~~~~~~~~~~~~~~~~~~~~~~~~~~~~~~~~~~~~~~~~~~~\text{$\frac{q^{m_{1}}-1}{q+1}$ times,}  \\
\frac{(q^{m}-1)(q-1)\left(q^{m_{1}-1}(q^{m_{2}-1}-1)-(-1)^{\frac{m_{1}+m_{2}}{2}+1}(q-1)q^{\frac{m_{1}+m_{2}}{2}-2}\right)}{(q^{m_{1}}-1)(q^{m_{2}}-1)}\\    ~~~~~~~~~~~~~~~~~~~~~~~~~~~~~~~~~~~~~~~~~~~~~~~~~~~~~~~~~~~\text{$\frac{q(q^{m_{1}}-1)}{q+1}$ times.}  \\
\end{array}\right.
\end{eqnarray*}
Case 4. If $b\neq0,c\neq0$, then by Equation (\ref{Nbc}) we have
\begin{eqnarray*}
\text{wt}(\textbf{c}_{b,c})
&=&n-N_{b,c}\\
&=&\left\{\begin{array}{ll}
\frac{(q^{m}-1)\left((q^{m_{2}-1}-1)(q^{m_{1}}-q^{m_{1}-1}-1)+(-1)^{\frac{m_{1}+m_{2}}{2}}(q-1)q^{\frac{m_{1}+m_{2}}{2}-1}\right)}{(q^{m_{1}}-1)(q^{m_{2}}-1)}+1\\    ~~~~~~~~~~~~~~~~~~~~~~~~~~~~~~~~~~~~~~~~~~~~~~~~~~~~~~~~~~~~~\text{$\frac{(q^{m_{1}}-1)(q-1)}{q+1}$ times,}   \\
\frac{(q^{m}-1)\left((q^{m_{2}-1}-1)(q^{m_{1}}-q^{m_{1}-1}-1)+(-1)^{\frac{m_{1}+m_{2}}{2}+1}(q-1)q^{\frac{m_{1}+m_{2}}{2}-2}\right)}{(q^{m_{1}}-1)(q^{m_{2}}-1)}+1\\   ~~~~~~~~~~~~~~~~~~~~~~~~~~~~~~~~~~~~~~~~~~~~~~~~~~~~~~~~~~~~~\text{$\frac{q(q^{m_{1}}-1)(q-1)}{q+1}$ times.}   \\
\end{array}\right.
\end{eqnarray*}
Then the weight distribution follows. The dimension of $\overline{\mathcal{C}_{D}}$ is $m_{1}+1$ as the zero codeword occurs once when $(b,c)$ runs through $\mathbb{F}_{q^{m_1}} \times \mathbb{F}_{q}$.

In the following, we will determine the minimum distance of $\overline{\mathcal{C}_{D}}^{\perp}$ with $m=m_{1}, m_{2}=2$. Denote
\begin{eqnarray*}
\omega_{1}=\frac{(q^{m}-1)(q^{m_{2}}-q)}{q(q^{m_{2}}-1)}+1,
\end{eqnarray*}
\begin{eqnarray*}
\omega_{2}=\frac{(q^{m}-1)(q-1)\left(q^{m_{1}-1}(q^{m_{2}-1}-1)-(-1)^{\frac{m_{1}+m_{2}}{2}}(q-1)q^{\frac{m_{1}+m_{2}}{2}-1}\right)}{(q^{m_{1}}-1)(q^{m_{2}}-1)},
\end{eqnarray*}
\begin{eqnarray*}
\omega_{3}=\frac{(q^{m}-1)(q-1)\left(q^{m_{1}-1}(q^{m_{2}-1}-1)-(-1)^{\frac{m_{1}+m_{2}}{2}+1}(q-1)q^{\frac{m_{1}+m_{2}}{2}-2}\right)}{(q^{m_{1}}-1)(q^{m_{2}}-1)},
\end{eqnarray*}
\begin{eqnarray*}
\omega_{4}=\frac{\left((q^{m_{2}-1}-1)(q^{m_{1}}-q^{m_{1}-1}-1)+(-1)^{\frac{m_{1}+m_{2}}{2}}(q-1)q^{\frac{m_{1}+m_{2}}{2}-1}\right)}{(q^{m_{1}}-1)(q^{m_{2}}-1)}\\
~~~~~~~~\cdot(q^{m}-1)+1,
\end{eqnarray*}
\begin{eqnarray*}
\omega_{5}=\frac{\left((q^{m_{2}-1}-1)(q^{m_{1}}-q^{m_{1}-1}-1)+(-1)^{\frac{m_{1}+m_{2}}{2}+1}(q-1)q^{\frac{m_{1}+m_{2}}{2}-2}\right)}{(q^{m_{1}}-1)(q^{m_{2}}-1)}\\
~~~~~~~~\cdot(q^{m}-1)+1.
\end{eqnarray*}
By the second,third, fourth and fifth Pless power moments in \cite{WV}, we have
\begin{eqnarray*}
\begin{cases}
\sum_{i=1}^{5}\omega_{i}A_{\omega_{i}}=q^{m}(qn-n-A_{1}^{\perp}),\\
\sum_{i=1}^{5}\omega_{i}^{2}A_{\omega_{i}}=q^{m-1}[(q-1)n(qn-n+1)-(2qn-q-2n+2)A_{1}^{\perp}+2A_{2}^{\perp}],\\
\sum_{i=1}^{5}\omega_{i}^{3}A_{\omega_{i}}=q^{m-2}[(q-1)n(q^{2}n^{2}-2qn^{2}+3qn-q+n^{2}-3n+2)\\
~~~~~~~~~~~~~~~~~~~-(3q^{2}n^{2}-3q^{2}n-6qn^{2}+12qn+q^{2}-6q+3n^{2}-9n+6)A_{1}^{\perp}\\
~~~~~~~~~~~~~~~~~~~+6(qn-q-n+2)A_{2}^{\perp}-6A_{3}^{\perp}],\\
\sum_{i=1}^{5}\omega_{i}^{4}A_{\omega_{i}}=q^{m-3}[(q-1)n(q^{3}n^{3}-3q^{2}n^{3}+6q^{2}n^{2}-4q^{2}n+q^{2}+3qn^{3}\\
~~~~~~~~~~~~~~~~~~~-12qn^{2}+15qn-6q-n^{3}+6n^{2}-11n+6)-(4q^{3}n^{3}\\
~~~~~~~~~~~~~~~~~~~-6q^{3}n^{2}+4q^{3}n-q^{3}-12q^{2}n^{3}+36q^{2}n^{2}-38q^{2}n+14q^{2}\\
~~~~~~~~~~~~~~~~~~~+12qn^{3}-54qn^{2}+78qn-36q-4n^{3}+24n^{2}-44n+24)A_{1}^{\perp}\\
~~~~~~~~~~~~~~~~~~~+(12q^{2}n^{2}-24q^{2}n+14q^{2}-24qn^{2}+84qn-72q+12n^{2}\\
~~~~~~~~~~~~~~~~~~~-60n+72)A_{2}^{\perp}-(24qn-36q-24n+72)A_{3}^{\perp}+24A_{4}^{\perp}],
\end{cases}
\end{eqnarray*}
where $n=\frac{(q^{m}-1)(q^{m_{2}}-q)}{q(q^{m_{2}}-1)}+1$, $m$ is even and $A_{\omega_{i}}(1\leq i\leq5)$ denotes the frequency of each weight $\omega_{i}$ in Table \ref{tab2}. Solving the above system of linear equations gives
\begin{eqnarray*}
A_{1}^{\perp}=A_{2}^{\perp}=0,
\end{eqnarray*}
\begin{eqnarray*}
A_{3}^{\perp}=\frac{(q^{m}-1)(q^{2}-3q+2)(q+q^{m}+2q^{2}+(-1)^{\frac{3m}{2}}q^{\frac{m}{2}+1}-(-1)^{\frac{3m}{2}}q^{\frac{m}{2}+2})}{6(q+1)^{3}}.
\end{eqnarray*}
If $q>2$, then $A_{3}^{\perp}>0$ and the minimum distance of $\overline{\mathcal{C}_{D}}^{\perp}$ is 3. If $q=2$, we can verify that $A_{3}^{\perp}=0$ and
\begin{eqnarray*}
A_{4}^{\perp}=\frac{(2^{m}-1)(2^{2m}-12\cdot2^{m}-16(-1)^{\frac{3m}{2}}2^{\frac{m}{2}})}{1944}>0,
\end{eqnarray*}
where $m=m_{1}>4$. Then the minimum distance of $\overline{\mathcal{C}_{D}}^{\perp}$ is $4$ for $q=2$ and $\overline{\mathcal{C}_{D}}^{\perp}$ is distance-optimal according to the sphere-packing bound.

The divisibility of $\overline{\mathcal{C}_{D}}$ follows from its weight distribution and its self-orthogonality follows from Lemma \ref{th-selforthogonal}.
\end{proof}

\begin{example}
 Let $m=m_{1}=4$, $m_{2}=2$, and $q=3$. Then $\overline{\mathcal{C}_{D}}$ is a $3$-divisible self-orthogonal $[21,5,12]$ ternary code with weight enumerator $1+100z^{12}+120z^{15}+20z^{18}+2z^{21}$, and its dual has parameters $[21,16,3]$ by Theorem \ref{ths}. These results are confirmed to be correct by Magma.
 Besides, we remark that $\overline{\mathcal{C}_{D}}$ is distance-optimal according to the Griesmer bound and its dual is also distance-optimal by the database at http://www.codetables.de/.
\end{example}

\begin{example}
 Let $m=m_{1}=6$, $m_{2}=2$, and $q=3$. Then $\overline{\mathcal{C}_{D}}$ is a $3$-divisible self-orthogonal $[183,7,108]$ ternary code with weight enumerator $1+182z^{108}+1092z^{120}+546z^{126}+364z^{129}+2z^{183}$, and its dual has parameters $[183,176,3]$ by Theorem \ref{ths}. These results are confirmed to be correct by Magma.
 Besides, we remark that the dual of $\overline{\mathcal{C}_{D}}$ is distance-optimal by the database at http://www.codetables.de/.
 \end{example}

\begin{example}
 Let $m=m_{1}=6$, $m_{2}=2$, and $q=2$. Then $\overline{\mathcal{C}_{D}}$ is a $[22,7,8]$ linear code with weight enumerator $1+21z^{8}+42z^{10}+42z^{12}+21z^{14}+z^{22}$, and its dual has parameters $[22,15,4]$ by Theorem \ref{ths}. These results are confirmed to be correct by Magma.
 Besides, we remark that $\overline{\mathcal{C}_{D}}$ and the dual of $\overline{\mathcal{C}_{D}}$ are both distance-optimal by the database at http://www.codetables.de/.
 \end{example}

\begin{example}
 Let $m=m_{1}=8$, $m_{2}=2$, and $q=2$. Then $\overline{\mathcal{C}_{D}}$ is a $[86,9,38]$ linear code with weight enumerator $1+85z^{38}+170z^{40}+170z^{46}+85z^{48}+z^{86}$, and its dual has parameters $[86,77,4]$ by Theorem \ref{ths}. These results are confirmed to be correct by Magma.
 Besides, we remark that the dual of $\overline{\mathcal{C}_{D}}$ is distance-optimal by the database at http://www.codetables.de/.
 \end{example}

\begin{theorem}\label{tht}
Let $m$, $m_{1}$, $m_{2}$ be positive integers such that $m_{1}\mid m$, $m_{2}\mid m$, where $\gcd(m_{1},m_{2})=e=1$, $\gcd(\frac{m_{2}}{e},q-1)=l=2$. Then $\overline{\mathcal{C}_{D}}$ is a four-weight linear code with parameters $[\frac{(q^{m}-1)(q^{m_{2}}-q)}{q(q^{m_{2}}-1)}+1,m_{1}+1]$ and its weight distribution is given in Table \ref{tab3}. In particular, $\overline{\mathcal{C}_{D}}$ is a $q$-divisible self-orthogonal code if $m_{1}\geq2, m_{2}\geq2, m_{1}+m_{2}\geq5$ and $q$ is an odd prime power.

\begin{table}[h]
\begin{center}
\begin{minipage}{280pt}
\centering{\caption{Weight distribution of $\overline{\mathcal{C}_{D}}$ for $e=1,l=2$.}\label{tab3}}%
\begin{tabular}{@{}cc@{}}
\toprule
Weight & Frequency  \\
\midrule
0    & 1  \\
$\frac{(q^{m}-1)(q^{m_{2}}-q)}{q(q^{m_{2}}-1)}+1$  & $q-1$  \\
$\frac{q^{m_{1}-1}(q^{m}-1)(q-1)(q^{m_{2}-1}-1)}{(q^{m_{1}}-1)(q^{m_{2}}-1)}$    & $q^{m_{1}}-1$   \\
$\frac{(q^{m}-1)\left((q^{m_{2}-1}-1)(q^{m_{1}}-q^{m_{1}-1}-1)+(q-1)q^{\frac{m_{1}+m_{2}-3}{2}}\right)}{(q^{m_{1}}-1)(q^{m_{2}}-1)}+1$    & $\frac{(q^{m_{1}}-1)(q-1)}{2}$   \\
$\frac{(q^{m}-1)\left((q^{m_{2}-1}-1)(q^{m_{1}}-q^{m_{1}-1}-1)-(q-1)q^{\frac{m_{1}+m_{2}-3}{2}}\right)}{(q^{m_{1}}-1)(q^{m_{2}}-1)}+1$    & $\frac{(q^{m_{1}}-1)(q-1)}{2}$   \\
\bottomrule
\end{tabular}
\end{minipage}
\end{center}
\end{table}
\end{theorem}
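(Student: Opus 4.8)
The plan is to follow the template already established in the proofs of Theorems~\ref{thf} and~\ref{ths}. Since $e=1$, Lemma~\ref{db}(1) gives the single value $\Delta(b)=\frac{(q^{m}-1)(q-1)^{2}}{(q^{m_{1}}-1)(q^{m_{2}}-1)}$ for every $b\in\mathbb{F}_{q^{m_{1}}}^{*}$, and since $l=2,\,e=1$, Lemma~\ref{le5} supplies the two-valued distribution of $\Omega(b,c)$, namely $\frac{(q^{m}-1)(q-1)}{(q^{m_{1}}-1)(q^{m_{2}}-1)}\left(-1\mp q^{\frac{m_{1}+m_{2}+1}{2}}\right)$, each value occurring $\frac{(q^{m_{1}}-1)(q-1)}{2}$ times. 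I would record at the outset that the hypotheses $e=1$ and $l=2$ force $m_{2}$ even and $m_{1}$ odd, so $m_{1}+m_{2}$ is odd and the exponents $\frac{m_{1}+m_{2}+1}{2}$ and $\frac{m_{1}+m_{2}-3}{2}$ are genuine integers.

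Next I would compute $\mathrm{wt}(\mathbf{c}_{b,c})$ by splitting the parameter pair $(b,c)\in\mathbb{F}_{q^{m_{1}}}\times\mathbb{F}_{q}$ into the same four cases, using $\mathrm{wt}(\mathbf{c}_{b,c})=n-N_{b}$ when $c=0$ and $\mathrm{wt}(\mathbf{c}_{b,c})=n-N_{b,c}$ when $c\neq0$, with $N_{b}$ and $N_{b,c}$ read off from Equations~(\ref{Nb}) and~(\ref{Nbc}). The pair $(0,0)$ gives the zero codeword; the $q-1$ pairs with $b=0,\,c\neq0$ give the full-length word $c\mathbf{1}$ of weight $n$; the $q^{m_{1}}-1$ pairs with $b\neq0,\,c=0$ give a single weight of frequency $q^{m_{1}}-1$ after substituting the constant $\Delta(b)$ into~(\ref{Nb}); and the pairs with $b\neq0,\,c\neq0$ split according to the two values of $\Omega(b,c)$, each yielding a weight of frequency $\frac{(q^{m_{1}}-1)(q-1)}{2}$. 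The key arithmetic simplification is that the factor $\frac{1}{q^{2}}$ in~(\ref{Nbc}) turns $q^{\frac{m_{1}+m_{2}+1}{2}}$ into $q^{\frac{m_{1}+m_{2}-3}{2}}$, which is exactly the Gaussian-sum term appearing in Table~\ref{tab3}; collecting the four cases then yields the table, and the dimension is $m_{1}+1$ since the zero codeword is attained only by $(b,c)=(0,0)$. As a consistency check the frequencies sum to $1+(q-1)+(q^{m_{1}}-1)+(q^{m_{1}}-1)(q-1)=q^{m_{1}+1}$.

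Finally, for the $q$-divisibility I would verify directly from Table~\ref{tab3} that each nonzero weight is a multiple of $q$ under the constraints $m_{1}\geq2$, $m_{2}\geq2$, $m_{1}+m_{2}\geq5$. The Case~3 weight is visibly divisible by $q$ through its factor $q^{m_{1}-1}$ once $m_{1}\geq2$; for the length $n$ and the two Case~4 weights, whose closed forms carry a trailing $+1$ from the all-one shift, I would reduce the governing fractions modulo $q$ (the denominators $(q^{m_{1}}-1)(q^{m_{2}}-1)$ being units mod $q$) and check that the residue is $-1$, so that the $+1$ restores divisibility; for instance $\frac{(q^{m}-1)(q^{m_{2}-1}-1)}{q^{m_{2}}-1}\equiv-1\pmod q$ for $m_{2}\geq2$ already gives $n\equiv0\pmod q$, and the constraint $m_{1}+m_{2}\geq5$ guarantees that the Gaussian term $q^{\frac{m_{1}+m_{2}-3}{2}}$ itself contributes a factor of $q$ and hence does not disturb this residue. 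This congruence bookkeeping is the only genuinely delicate part of the argument. Once $q$-divisibility is in hand, self-orthogonality is immediate from Lemma~\ref{th-selforthogonal}, since $\mathbf{1}\in\overline{\mathcal{C}_{D}}$ by construction and $q=p^{s}$ with $p$ an odd prime.
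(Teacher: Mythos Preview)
Your proposal is correct and follows essentially the same approach as the paper: invoke Lemma~\ref{db}(1) for $\Delta(b)$ and Lemma~\ref{le5} for $\Omega(b,c)$, plug into Equations~(\ref{Nb}) and~(\ref{Nbc}), split into the four cases on $(b,c)$, and read off the table. In fact your outline is more thorough than the paper's own proof, which contains a typographical slip in Case~3 and does not spell out the $q$-divisibility verification or the appeal to Lemma~\ref{th-selforthogonal} at all.
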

\begin{proof}
By Lemma \ref{db} and Lemma \ref{le5}, we know that
\begin{eqnarray*}
\Delta(b)=\frac{(q^{m}-1)(q-1)^{2}}{(q^{m_{1}}-1)(q^{m_{2}}-1)}
\end{eqnarray*}
and
\begin{eqnarray*}
\Omega(b,c)=
\begin{cases}
\frac{(q^{m}-1)(q-1)}{(q^{m_{1}}-1)(q^{m_{2}}-1)}\left(-1-q^{\frac{m_{1}+m_{2}+1}{2}}\right) & \frac{(q^{m_{1}}-1)(q-1)}{2} times,\\
\frac{(q^{m}-1)(q-1)}{(q^{m_{1}}-1)(q^{m_{2}}-1)}\left(-1+q^{\frac{m_{1}+m_{2}+1}{2}}\right) & \frac{(q^{m_{1}}-1)(q-1)}{2} times.
\end{cases}
\end{eqnarray*}
where $\gcd(m_{1},m_{2})=e=1$, $\gcd(\frac{m_{2}}{e},q-1)=l=2$. For any codeword
$$\textbf{c}_{b,c}=\left(\mathrm{Tr}_{q^{m_{1}}/q}(b\mathrm{N}_{q^{m}/q^{m_{1}}}(x)\right)_{x\in D}+c\mathbf{1}\in \overline{\mathcal{C}_{D}},$$
let $\text{wt}(\textbf{c}_{b,c})$ denote its Hamming weight. We consider the following cases.\\
Case 1. If $b=0,c=0$, then we have
\begin{eqnarray*}
\text{wt}(\textbf{c}_{b,c})=0.
\end{eqnarray*}
Case 2. If $b=0,c\neq0$, then we have
\begin{eqnarray*}
\text{wt}(\textbf{c}_{b,c})=n=\frac{(q^{m}-1)(q^{m_{2}}-q)}{q(q^{m_{2}}-1)}+1.
\end{eqnarray*}
Case 3. If $b\neq0,c=0$, then by Equation (\ref{Nb}) we have
\begin{eqnarray*}
\text{wt}(\textbf{c}_{b,c})
&=&n-N_{b}\\
&=&\frac{(q^{m}-1)(q^{m_{2}}-q)}{q(q^{m_{2}}-1)}+1.
\end{eqnarray*}
Case 4. If $b\neq0,c\neq0$, then by Equation (\ref{Nbc}) we have
\begin{eqnarray*}
\text{wt}(\textbf{c}_{b,c})
&=&n-N_{b,c}\\
&=&\left\{\begin{array}{ll}
\frac{(q^{m}-1)\left((q^{m_{2}-1}-1)(q^{m_{1}}-q^{m_{1}-1}-1)+(q-1)q^{\frac{m_{1}+m_{2}-3}{2}}\right)}{(q^{m_{1}}-1)(q^{m_{2}}-1)}+1\\    ~~~~~~~~~~~~~~~~~~~~~~~~~~~~~~~~~~~~~~~~~~~~~~\text{$\frac{(q^{m_{1}}-1)(q-1)}{2}$ times,}   \\
\frac{(q^{m}-1)\left((q^{m_{2}-1}-1)(q^{m_{1}}-q^{m_{1}-1}-1)-(q-1)q^{\frac{m_{1}+m_{2}-3}{2}}\right)}{(q^{m_{1}}-1)(q^{m_{2}}-1)}+1\\    ~~~~~~~~~~~~~~~~~~~~~~~~~~~~~~~~~~~~~~~~~~~~~~\text{$\frac{q(q^{m_{1}}-1)(q-1)}{2}$ times.}   \\
\end{array}\right.
\end{eqnarray*}
Then the weight distribution follows. The dimension of $\overline{\mathcal{C}_{D}}$ is $m_{1}+1$ as the zero codeword occurs once when $(b,c)$ runs through $\mathbb{F}_{q^{m_1}} \times \mathbb{F}_{q}$.
\end{proof}

\begin{example}
 Let $m=6$, $m_{1}=3$, $m_{2}=2$, and $q=3$. Then $\overline{\mathcal{C}_{D}}$ is a $3$-divisible self-orthogonal ternary $[183,4,99]$ code with enumerator
   $1+26z^{99}+26z^{126}+26z^{141}+2z^{183}$ by Theorem \ref{tht}. These results are confirmed to be correct by Magma.
  \end{example}

\subsection{The locality of $\overline{\mathcal{C}_{D}}$}

In this subsection, we will determine the locality of $\overline{\mathcal{C}_{D}}$. The conventional definition of linear locally recoverable codes is presented as follows.
\begin{definition}\cite{PEH}\label{deflr}
Let $\mathcal{C}$ be a linear code over $\mathbb{F}_{q}$ with a generator matrix $G=[\mathbf{g}_{1}, \mathbf{g}_{2},\cdots, \mathbf{g}_{n}]$. If $\mathbf{g}_{i}$
is a linear combination of at most $r$ other columns of $G$, then $r$ is called the locality of the $i$-th symbol
of each codeword of $\mathcal{C}$. Besides, $\mathcal{C}$ is called a locally recoverable code with locality $r$ if all the symbols
of codeword of $\mathcal{C}$ have locality $r$.
\end{definition}

In the following, we will use Definition \ref{deflr} to determine the locality of $\overline{\mathcal{C}_{D}}$.

\begin{lemma}\label{kx}
For the defining set $D$ in Equation (\ref{D}), if $x\in D$ and $l\in\mathbb{F}_{q}^{*}$, then we have $lx\in D$.
\end{lemma}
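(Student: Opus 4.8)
The plan is to exploit two structural features of the defining set $D$: the multiplicativity of the norm map and the $\mathbb{F}_{q}$-linearity of the trace map $\mathrm{Tr}_{q^{m_{2}}/q}$. First I would dispose of the trivial case $x=0$: since $\mathrm{N}_{q^{m}/q^{m_{2}}}(0)=0$ and $\mathrm{Tr}_{q^{m_{2}}/q}(0)=0$, we have $0\in D$, and $l\cdot 0=0\in D$ for every $l\in\mathbb{F}_{q}^{*}$, so there is nothing to check there.

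For $x\neq 0$, the scaled element $lx$ is again nonzero, so I would apply the multiplicative identity $\mathrm{N}_{q^{m}/q^{m_{2}}}(lx)=\mathrm{N}_{q^{m}/q^{m_{2}}}(l)\,\mathrm{N}_{q^{m}/q^{m_{2}}}(x)$, which is immediate from the definition $\mathrm{N}_{q^{m}/q^{m_{2}}}(y)=y^{(q^{m}-1)/(q^{m_{2}}-1)}$. The key observation is that the scalar $\mu:=\mathrm{N}_{q^{m}/q^{m_{2}}}(l)=l^{(q^{m}-1)/(q^{m_{2}}-1)}$ lies in $\mathbb{F}_{q}^{*}$, because $l\in\mathbb{F}_{q}$ and $\mathbb{F}_{q}$ is closed under taking powers. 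In fact one checks that $(q^{m}-1)/(q^{m_{2}}-1)=\sum_{j=0}^{m/m_{2}-1}q^{jm_{2}}\equiv m/m_{2}\pmod{q-1}$, so that $\mu=l^{m/m_{2}}\in\mathbb{F}_{q}^{*}$ explicitly; but for the argument it suffices to know that $\mu$ is a nonzero base-field element.

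Having placed $\mu$ in the base field $\mathbb{F}_{q}$, I would invoke the $\mathbb{F}_{q}$-linearity of $\mathrm{Tr}_{q^{m_{2}}/q}\colon\mathbb{F}_{q^{m_{2}}}\to\mathbb{F}_{q}$ to factor $\mu$ out of the trace:
$$\mathrm{Tr}_{q^{m_{2}}/q}\bigl(\mathrm{N}_{q^{m}/q^{m_{2}}}(lx)\bigr)=\mathrm{Tr}_{q^{m_{2}}/q}\bigl(\mu\,\mathrm{N}_{q^{m}/q^{m_{2}}}(x)\bigr)=\mu\,\mathrm{Tr}_{q^{m_{2}}/q}\bigl(\mathrm{N}_{q^{m}/q^{m_{2}}}(x)\bigr).$$
By hypothesis $x\in D$ means precisely that the rightmost trace vanishes, so the whole expression equals $\mu\cdot 0=0$, whence $lx\in D$.

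The argument is essentially immediate once these two observations are in hand, and I anticipate no genuine obstacle. The only point that merits a moment's care is confirming that $\mathrm{N}_{q^{m}/q^{m_{2}}}(l)$ really is a base-field element, rather than merely an element of $\mathbb{F}_{q^{m_{2}}}$, since it is exactly this membership in $\mathbb{F}_{q}$ that licenses pulling the scalar through the $\mathbb{F}_{q}$-trace. For that reason I would record the explicit evaluation $\mu=l^{m/m_{2}}\in\mathbb{F}_{q}^{*}$ as part of the proof.
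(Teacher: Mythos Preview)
Your argument is correct and is exactly the unpacking of what the paper has in mind: its proof is the single sentence ``The desired conclusion directly follows from the properties of trace and norm functions,'' and you have spelled out precisely those properties (multiplicativity of the norm, the fact that $\mathrm{N}_{q^{m}/q^{m_{2}}}(l)\in\mathbb{F}_{q}^{*}$ for $l\in\mathbb{F}_{q}^{*}$, and $\mathbb{F}_{q}$-linearity of the trace). There is no substantive difference in approach.
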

\begin{proof}
The desired conclusion directly follows from the properties of trance and norm functions.
\end{proof}

In the following, we determine the locality of $\overline{\mathcal{C}_{D}}$.

\begin{theorem}\label{locality}
Let $D$ be the defining set defined in Equation (\ref{D}). If $q>2$, then $\overline{\mathcal{C}_{D}}$ is a locally recoverable code with locality 2.
\end{theorem}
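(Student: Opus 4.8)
The plan is to pin down a generator matrix of $\overline{\mathcal{C}_{D}}$ explicitly and then to exploit the scaling action of $\mathbb{F}_{q}^{*}$ on $D$ supplied by Lemma \ref{kx}. Fix an $\mathbb{F}_{q}$-basis $v_{1},\dots,v_{m_{1}}$ of $\mathbb{F}_{q^{m_{1}}}$ and write $b=\sum_{j}b_{j}v_{j}$; since $\mathrm{Tr}_{q^{m_{1}}/q}$ is $\mathbb{F}_{q}$-linear, the entry of the codeword indexed by $(b,c)$ at position $x\in D$ is $\sum_{j}b_{j}\mathrm{Tr}_{q^{m_{1}}/q}(v_{j}\mathrm{N}_{q^{m}/q^{m_{1}}}(x))+c$. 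Hence $\overline{\mathcal{C}_{D}}$ has a generator matrix whose column at position $x$ is
\[
\mathbf{g}_{x}=\bigl(\mathrm{Tr}_{q^{m_{1}}/q}(v_{1}\mathrm{N}_{q^{m}/q^{m_{1}}}(x)),\dots,\mathrm{Tr}_{q^{m_{1}}/q}(v_{m_{1}}\mathrm{N}_{q^{m}/q^{m_{1}}}(x)),\,1\bigr)^{\mathsf{T}}=(\mathbf{u}_{x},1)^{\mathsf{T}},
\]
where $\mathbf{u}_{x}\in\mathbb{F}_{q}^{m_{1}}$ is the coordinate vector of $\mathrm{N}_{q^{m}/q^{m_{1}}}(x)$; note that $0\in D$ with $\mathbf{u}_{0}=\mathbf{0}$, while $\mathbf{u}_{x}\neq\mathbf{0}$ for every $x\in D'$. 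The first step is a scaling identity: for $\ell\in\mathbb{F}_{q}^{*}$ set $t=\frac{q^{m}-1}{q^{m_{1}}-1}$. By Lemma \ref{kx} we have $\ell x\in D$, multiplicativity of the norm gives $\mathrm{N}_{q^{m}/q^{m_{1}}}(\ell x)=\ell^{t}\mathrm{N}_{q^{m}/q^{m_{1}}}(x)$ with $\ell^{t}\in\mathbb{F}_{q}^{*}$, and $\mathbb{F}_{q}$-linearity of the trace then yields $\mathbf{u}_{\ell x}=\ell^{t}\mathbf{u}_{x}$, i.e. $\mathbf{g}_{\ell x}=(\ell^{t}\mathbf{u}_{x},1)^{\mathsf{T}}$. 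So all columns attached to one $\mathbb{F}_{q}^{*}$-orbit are affinely aligned.

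Next I would recover every column at a nonzero position. Fix $x\in D'$ and any $\ell\in\mathbb{F}_{q}^{*}\setminus\{1\}$, which exists precisely because $q>2$; then $\ell x$ is a position of $D$ distinct from $x$. Using $\mathbf{u}_{x}\neq\mathbf{0}$ one verifies the affine relation
\[
\mathbf{g}_{x}=\ell^{-t}\,\mathbf{g}_{\ell x}+(1-\ell^{-t})\,\mathbf{g}_{0},
\]
whose coefficients sum to $1$ (giving the correct last coordinate) and whose first block is $\ell^{-t}\ell^{t}\mathbf{u}_{x}=\mathbf{u}_{x}$. Thus $\mathbf{g}_{x}$ is an $\mathbb{F}_{q}$-linear combination of the two other columns $\mathbf{g}_{\ell x}$ and $\mathbf{g}_{0}$ (and collapses to locality $1$ when $\ell^{t}=1$, since then $\mathbf{g}_{\ell x}=\mathbf{g}_{x}$). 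Consequently every nonzero coordinate has locality at most $2$.

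It remains to recover the single column $\mathbf{g}_{0}=(\mathbf{0},1)^{\mathsf{T}}$, and this is the hard part. Writing $\mathbf{g}_{0}=\alpha\mathbf{g}_{x}+\beta\mathbf{g}_{y}$ with $\alpha+\beta=1$ forces $\alpha\mathbf{u}_{x}+\beta\mathbf{u}_{y}=\mathbf{0}$, so I must produce two positions $x,y\in D'$ whose norms are $\mathbb{F}_{q}$-proportional with ratio different from $1$. The clean case is when some $\ell\in\mathbb{F}_{q}^{*}$ has $\ell^{t}\neq1$ (equivalently $(q-1)\nmid\frac{m}{m_{1}}$): taking $y=\ell x$ the scaling identity gives $\mathbf{g}_{0}=\frac{-\ell^{t}}{1-\ell^{t}}\mathbf{g}_{x}+\frac{1}{1-\ell^{t}}\mathbf{g}_{\ell x}$, so $\mathbf{g}_{0}$ has locality $2$; in particular, when $q$ is odd and $t$ is odd one may simply take $\ell=-1$ to obtain $\mathbf{g}_{0}=\tfrac12\mathbf{g}_{x}+\tfrac12\mathbf{g}_{-x}$. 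The genuine obstacle is the degenerate regime $\ell^{t}=1$ for all $\ell\in\mathbb{F}_{q}^{*}$, where the whole $\mathbb{F}_{q}^{*}$-orbit of $x$ only repeats the column $\mathbf{g}_{x}$ and hence yields no proportional pair with ratio $\neq1$; one must then exhibit two positions in distinct orbits with $\mathbb{F}_{q}$-proportional norms. I would resolve this by a counting argument comparing $\lvert\mathrm{N}_{q^{m}/q^{m_{1}}}(D')\rvert$ with the number $\frac{q^{m_{1}}-1}{q-1}$ of $\mathbb{F}_{q}$-lines through the origin in $\mathbb{F}_{q^{m_{1}}}$: once $m_{1}\ge 2$ and the length is large enough the norm image must meet some line in at least two points, which supplies the required pair. (The bound $m_{1}\ge 2$ is genuinely needed: for $q=3$, $m=2$, $m_{1}=1$, $m_{2}=2$ one gets the $[3,2]$ code $\{(c,d,d)\}$, whose first coordinate is not a function of the others, so some mild nondegeneracy hypothesis must be invoked at exactly this step.) Combining the three steps gives locality $2$ for $\overline{\mathcal{C}_{D}}$.
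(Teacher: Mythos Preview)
Your argument is essentially the paper's: both write down the generator matrix with columns $\mathbf{g}_x=(\mathbf{u}_x,1)^{\mathsf T}$, invoke Lemma~\ref{kx} to scale $x\mapsto\ell x$ inside $D$, and express each column as an affine combination of a scaled column and the column at $0$. The paper packages your two recovery steps into one by fixing $u=\beta^{-t}\in\mathbb{F}_q\setminus\{0,1\}$ with $\gcd(m/m_1,q-1)\mid t$ and $0<t<q-1$, solving $u\cdot l^{m/m_1}=1$ for $l\in\mathbb{F}_q^*$, and noting that since both coefficients $u$ and $1-u$ are nonzero the relation $\mathbf{g}_i=u\mathbf{g}_j+(1-u)\mathbf{g}_n$ can be inverted to recover $\mathbf{g}_n=\mathbf{g}_0$ as well.

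Where you go beyond the paper is in isolating the recovery of $\mathbf{g}_0$ and noticing that the scheme breaks when $\ell^{t}=1$ for every $\ell\in\mathbb{F}_q^*$, equivalently when $(q-1)\mid m/m_1$. This is a genuine gap that the paper's proof does not address: in that regime there is no integer $t$ with $0<t<q-1$ divisible by $q-1$, so the paper's chosen $u$ simply fails to exist. Your counterexample $(q,m,m_1,m_2)=(3,2,1,2)$ is correct and shows the statement is false as written; the code is $\{(c,d,d):c,d\in\mathbb{F}_3\}$ and the first coordinate admits no locality-$2$ recovery. Some mild extra hypothesis (for instance $m_1\ge 2$, or $(q-1)\nmid m/m_1$) is therefore needed. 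Your sketched pigeonhole fix for the degenerate case with $m_1\ge 2$ is plausible but not carried out in detail; the paper does not treat this case at all.
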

\begin{proof}
Let $\mathbb{F}_{q^{m_{1}}}^{*}=\langle\alpha_{1}\rangle$. Then $\{\alpha_{1}^{0}, \alpha_{1}^{1}, \cdots, \alpha_{1}^{m_{1}-1}\}$ is a $\mathbb{F}_{q}$-basis of $\mathbb{F}_{q^{m_{1}}}$. Let $d_{1}, d_{2}, \cdots, d_{n}$ be all the elements in $D$. Without loss of generality, let $d_{n}=0$ due to $0\in D$. By definition, the generator matrix $G$ of $\overline{\mathcal{C}_{D}}$ is given by
\begin{eqnarray*}
G:=
\begin{bmatrix}
1 & \cdots & 1\\
\mathrm{Tr}_{q^{m_{1}}/q}(\alpha_{1}^{0}\mathrm{N}_{q^{m}/q^{m_{1}}}(d_{1})) & \ldots & \mathrm{Tr}_{q^{m_{1}}/q}(\alpha_{1}^{0}\mathrm{N}_{q^{m}/q^{m_{1}}}(d_{n}))\\
\vdots & \ddots & \vdots\\
\mathrm{Tr}_{q^{m_{1}}/q}(\alpha_{1}^{m_{1}-1}\mathrm{N}_{q^{m}/q^{m_{1}}}(d_{1})) & \cdots & \mathrm{Tr}_{q^{m_{1}}/q}(\alpha_{1}^{m_{1}-1}\mathrm{N}_{q^{m}/q^{m_{1}}}(d_{n}))
\end{bmatrix}.
\end{eqnarray*}
For convenience, we assume that $\mathbf{g}_{i}$ denotes the $i$-th column of $G$, i.e., $$\mathbf{g}_{i}=\left(1, \mathrm{Tr}_{q^{m_{1}}/q}(\alpha_{1}^{0}\mathrm{N}_{q^{m}/q^{m_{1}}}(d_{i})), \cdots, \mathrm{Tr}_{q^{m_{1}}/q}(\alpha_{1}^{m_{1}-1}\mathrm{N}_{q^{m}/q^{m_{1}}}(d_{i}))\right)^{T},$$ where $1\leq i\leq n$. By Lemma \ref{kx}, we know that $ld_{i}\in D$ for any $l\in\mathbb{F}_{q}^{*}$ if $d_{i}\in D$. Let $\mathbb{F}_{q}^{*}=\langle\beta\rangle$ and $u=\beta^{-t} \in \mathbb{F}_{q}\backslash \{0,1\}$ such that $\gcd{(\frac{m}{m_{1}},q-1)}\mid t$ for $0<t<q-1$. Then there exists integer $s$ such that $\frac{sm}{m_{1}}\equiv t\pmod {q-1}$.
Let $l:=\beta^s$ and $v=1-u$. Then $u\cdot l^{\frac{m}{m_{1}}}=1$. For any $d_i$ with $1\leq i\leq n-1$ and $d_n=0$, let $d_{j}:=ld_{i}\in D$ and we can verify that
\begin{eqnarray*}
\begin{cases}
1&=u+v,\\
\mathrm{Tr}_{q^{m_{1}}/q}(\alpha_{1}^{0}\mathrm{N}_{q^{m}/q^{m_{1}}}(d_{i}))&=u\mathrm{Tr}_{q^{m_{1}}/q}(\alpha_{1}^{0}\mathrm{N}_{q^{m}/q^{m_{1}}}(d_{j}))\\
&~~+v\mathrm{Tr}_{q^{m_{1}}/q}(\alpha_{1}^{0}\mathrm{N}_{q^{m}/q^{m_{1}}}(d_{n})),\\
&\vdots\\
\mathrm{Tr}_{q^{m_{1}}/q}(\alpha_{1}^{m_{1}-1}\mathrm{N}_{q^{m}/q^{m_{1}}}(d_{i}))&=u\mathrm{Tr}_{q^{m_{1}}/q}(\alpha_{1}^{m_{1}-1}\mathrm{N}_{q^{m}/q^{m_{1}}}(d_{j}))\\
&~~+v\mathrm{Tr}_{q^{m_{1}}/q}(\alpha_{1}^{m_{1}-1}\mathrm{N}_{q^{m}/q^{m_{1}}}(d_{n})).
\end{cases}
\end{eqnarray*}
 This means that $\mathbf{g}_{i}$ is a linear combination of $\mathbf{g}_{j}$ and $\mathbf{g}_{n}$, where $1\leq i\leq n-1$. Since $v\neq0, 1$, it is clear that $\mathbf{g}_{n}$ is also a linear combination of $\mathbf{g}_{i}$ and $\mathbf{g}_{j}$. Then $\overline{\mathcal{C}_{D}}$ is a locally recoverable code with locality 2 according to Definition \ref{deflr}.
\end{proof}

In the following, we will present some optimal or almost optimal locally recoverable codes.
According to Inequality  (\ref{CMe}), for an $[n,k,d]$ locally recoverable code over $\mathbb{F}_{q}$ with locality $r$, we have
\begin{eqnarray}\label{bound-k}
k\leq\min\limits_{1\leq t\leq n/(r+1)}\left\{tr+k_{\text{opt}}^{(q)}(n-t(r+1),d)\right\}\leq r+k_{\text{opt}}^{(q)}(n-(r+1),d).
\end{eqnarray}

\begin{example}
Let $m=m_{1}=4$, $m_{2}=2$ and $q=3$. Then $\overline{\mathcal{C}_{D}}$ has parameters $[21,5,12]$ by Theorem \ref{ths} and locality 2 by Theorem \ref{locality}.
For $n=21,d=12,r=2$, by the Griesmer bound, we have
\begin{eqnarray*}
k_{\text{opt}}^{(q)}(n-(r+1),d)=k_{\text{opt}}^{(q)}(18,12)=3,
\end{eqnarray*}
Then the right hand of Inequality (\ref{bound-k}) equals $5$ and
$$\min\limits_{1\leq t\leq n/(r+1)}\left\{tr+k_{\text{opt}}^{(q)}(n-t(r+1),d)\right\}=5.$$
Therefore, $\overline{\mathcal{C}_{D}}$ is an optimal ternary $[21,5,12]$ locally recoverable code with locality $2$ with respect to the Cadambe-Mazumdar bound.
\end{example}

\begin{example}
Let $m=m_{1}=4$, $m_{2}=2$ and $q=4$. Then $\overline{\mathcal{C}_{D}}$ has parameters $[52,5,36]$ by Theorem \ref{ths} and locality 2 by Theorem \ref{locality}. Note that $\overline{\mathcal{C}_{D}}$ is a distance-optimal linear code according to the database at http://www.codetables.de/.
For $n=52$, $d=36$ and $r=2$, by the Griesmer bound, we have
\begin{eqnarray*}
k_{\text{opt}}^{(q)}(n-(r+1),d)=k_{\text{opt}}^{(q)}(49,36)=4.
\end{eqnarray*}
and
$$k_{\text{opt}}^{(q)}(n-2(r+1),d)=k_{\text{opt}}^{(q)}(46,36)=2.$$
Since $tr=2t\geq 6$ for $t\geq 3$, for the right hand of the Cadambe-Mazumdar bound we have
$$\min\limits_{1\leq t\leq n/(r+1)}\left\{tr+k_{\text{opt}}^{(q)}(n-t(r+1),d)\right\}=\min\limits_{t=1,2}\left\{tr+k_{\text{opt}}^{(q)}(n-t(r+1),d)\right\}=6.$$
Therefore, $\overline{\mathcal{C}_{D}}$ is an almost optimal quaternary $[52,5,36]$ locally recoverable code with locality $2$ with respect to the Cadambe-Mazumdar bound.
\end{example}

\section{Concluding remarks}\label{sec5}
In this paper, based on the augmentation technique, we constructed a family of linear codes by the trace and norm functions.
The parameters and weight distribution of the codes were determined in some cases by Gaussian sums.
The codes have only three, four or five nonzero weights in these cases. In particular, the codes constructed in this paper have the following interesting properties.
\begin{enumerate}
\item[$\bullet$] The codes in Theorems \ref{thf}, \ref{ths} and \ref{tht} are all $q$-divisible codes for large enough $m_1$ and $m_2$;
\item[$\bullet$] The codes in Theorems \ref{thf}, \ref{ths} and \ref{tht} are all self-orthogonal codes for odd prime power $q$ and large enough $m_1$ and $m_2$;
\item[$\bullet$] The codes in Theorem \ref{ths} contain two infinite subfamilies of projective codes. In Theorem \ref{ths}, there is also an infinite family of distance-optimal binary linear codes with parameters $\left[\frac{2^m+2}{3},\frac{2^m+2}{3}-m-1,4\right]$ for even $m>4$.
 \item[$\bullet$] If $q>2$, $\overline{\mathcal{C}_{D}}$ is a locally recoverable code with locality 2 by Theorem \ref{locality}. Some optimal or almost optimal locally recoverable code were also obtained.
\end{enumerate}
We remark that locally recoverable codes with locality $r\ll k$ are useful in  distributed storage systems.
The self-orthogonal codes in this paper can be used to construct even lattices by the main results in \cite{Z}.

\section*{}
\textbf{Availability of data and materials} Not applicable.

 \section*{Declarations}
\textbf{Conflict of interest} The authors declare that they have no conflicts of interest relevant to the content of this article.

%\section*{Acknowledgements}
%The authors are very grateful  to the reviewers and the Associate Editor for their constructive suggestions which greatly improve the
%quality of this paper.

%\subsection{The second family construction}\label{sect-code2}

%\bibitem{B} M. A. De Boer, Almost MDS codes, Des. Codes and Cryptogr. 9 (2) (1996) 143-155.

%\section*{References}

\end{document}